\newtheorem{thm}{Theorem}
\newtheorem{lm}{Lemma}
\newtheorem{prop}{Proposition}
\theoremstyle{definition}
\newtheorem{de}{Definition}
\newtheorem{al}{Algorithm}
\newtheorem{re}{Remark}
\newtheorem{ex}{Example}
\newcommand{\F}{\mathbb{F}}
\newcommand{\C}{\mathcal{C}}
\renewcommand{\H}{\mathcal{H}}
\newcommand{\G}{\mathcal{G}}
\newcommand{\X}{\mathcal{X}}
\newcommand{\x}{\mathbf{x}}
\newcommand{\y}{\mathbf{y}}
\renewcommand{\u}{\mathbf{u}}
\renewcommand{\v}{\mathbf{v}}
\newcommand{\g}{\mathbf{g}}
\newcommand{\wf}{\mathbf{w}}
\newcommand{\z}{\mathbf{z}}
\newcommand{\e}{\mathbf{e}}
\newcommand{\Gf}{\mathbf{G}}
\renewcommand{\a}{\alpha}
\renewcommand{\b}{\beta}
\renewcommand{\c}{\mathbf{c}}
\newcommand{\af}{\bm{\alpha}}
\newcommand{\be}{\bm{\beta}}
\newcommand{\rf}{\bm{\gamma}}
\newcommand{\rank}{\text{rank}}
\newcommand{\CSS}{\text{CSS}}
\newcommand{\TriCode}{\text{TriCode}}
\newcommand{\rowspan}{\text{RowSpan}}
\newcommand{\modeq}[2]{#1 \equiv #2 \pmod{2}}
\newcommand{\seq}[1]{#1_1,\dots, #1_c}
\newcommand{\w}{\wedge}
\newcommand{\one}{\mathbf{1}}
\newcommand{\zero}{\mathbf{0}}
\newcommand{\red}[1]{\textcolor{black}{#1}}
\newcommand{\rR}[1]{\textcolor{black}{#1}}
\newcommand{\rrR}[1]{\textcolor{black}{#1}}
\title{Triorthogonal Codes and Self-dual Codes\thanks{This research is supported by the National Natural Science Foundation of China (12071001).}}
\author{ Minjia Shi$^\ast$\thanks{smjwcl.good@163.com},
	Haodong Lu \thanks{hdlu818@163.com}, Jon-Lark Kim\thanks{jlkim@sogang.ac.kr},
	Patrick Sol\'e\thanks{{sole@enst.fr}}
	\thanks{Minjia Shi and Haodong Lu are with the Key Laboratory of Intelligent Computing Signal
Processing, Ministry of Education, School of Mathematical Sciences, Anhui
University, Hefei 230601, China; State Key Laboratory of integrated Service Networks, Xidian University, Xi'an,
710071, China. Jon-Lark Kim is with Department of Mathematics, Sogang University, Seoul, South Korea. Patrick Sol\'e is with Aix Marseille Univ, CNRS, Centrale Marseille, I2M, Marseille, France. }}
\date{}
\begin{document}
	\maketitle
	\begin{abstract}
		Triorthogonal matrices were introduced in Quantum Information Theory in connection with distillation of magic states (Bravyi and Haah (2012)).
		We give an algorithm to construct binary triorthogonal matrices from binary self-dual codes. Further, we generalize to this setting
		the classical coding techniques of shortening and extending. We also give some simple propagation rules.
	\end{abstract}
	
	\noindent {\bf MSC (2010) :}  Primary 94B05, Secondary 81P48.\\
	{\bf Keywords:} Triorthogonal matrices, self-dual codes, propagation rules, building up
	\section{Introduction}

Quantum Information Theory has seen an exponential development since the seminal papers \cite{css2,Caldetal,css} that introduced the CSS construction, which builds quantum codes from classical codes. In these three references, self-orthogonal codes play an important role. In the last decade, a notion of a triorthogonal code, motivated by distillation of magic states
\cite{basis paper 1,optimal asymptotic,basis paper 2}, led to the notion of triorthogonal matrices. \rrR{This notion comes from a new family of distillation protocols for the state 
\[
\left.|A\right> = \left.T|+\right>\left.\sim |0\right> +  e^{i\pi/4}\left.|1\right>
\]
with a distillation cost $O(\log^{\gamma}(1/\epsilon))$ was presented in \cite{basis paper 2}, where $\gamma = \log_2(\frac{3k+8}{k})$, $k$ is an arbitrary even integer and gate $T=\exp(-i\pi Z/8)$. }

\rrR{
In \cite{basis paper 2}, the authors succeeded in achieving their objective, which is to minimize the number of raw ancillas $\rho$ required to distill magic states $\left. |A\right>$ with a desired accuracy $\epsilon$. 
Specifically, let $\sigma$ be a state of $k$ qubits which is supposed to approximate $k$ copies of $\left.|A\right>$, and suppose such a state $\sigma$ can be prepared by a distillation protocol that takes an input $n$ copies of the raw ancilla $\rho$ and uses only Clifford operations. Then the protocol has a distillation cost $C=C(\epsilon)$ if and only if $n\le Ck$. Here, $C(\epsilon)=O(\log^{\gamma}(1/\epsilon))$ and $\gamma=\frac{\log(n/k)}{\log(d_Z)}$ when we consider concatenated distillation protocol based on a triorthogonal matrix, and $[[n,k,d_Z]]$ is called the parameters of the triorthogonal codes, which will be introduced in next section. These results illustrate that the parameters can estimate the distillation cost of a quantum code which is based on a triorthogonal matrix, and this quantum code is called a triorthogonal code.
Also, a good triorthogonal code should have a small distillation cost, i.e., it has a small $\gamma$. Precise definitions are given in the next section.}

\rrR{Based on the above background, the motivations of this paper are as follows:}
\begin{itemize}
	\item \rrR{Because of the significance of the triorthogonal codes, an important and necessary problem is to construct more triorthogonal codes.}
	\item \rrR{Although the triorthogonal codes with small parameters have been classified in \cite{data paper}, but there is not a good method to flexibly combine these codes to construct new triorthogonal codes.}
\end{itemize}
\rrR{
  Therefore, the main objective of this paper is to construct triorthogonal matrices and their corresponding triorthogonal codes with arbitrary parameters.} In this paper, we study triorthogonal matrices constructed from binary self-dual codes, a special class of self-orthogonal codes. For general information on self-dual codes we refer to the classical treatise \cite{book}. Further, we give some constructions of new triorthogonal matrices from previously known ones. In particular, the classical coding techniques of shortening, extending, and propagation rules are adapted to this new situation. We also discuss the parameters of the triorthogonal codes corresponding to some of these new matrices. \rrR{With the help of these methods, many triorthogonal matrices will be discovered, and many triorthogonal codes with nice parameters will be found. These techniques have played an important role in classical error correcting codes, and we believe that their generalization in this paper will facilitate the search of triorthogonal codes.}

The material is arranged as follows. The next section contains some basic definitions and results needed for the other sections. Section 3 presents some methods for constructing new triorthogonal matrices from known ones, and gives the parameters of triorthogonal codes corresponding to some of these new matrices. Section 4 considers the influence of self-dual codes on the parameters of triorthogonal codes, and presents an algorithm to find the triorthogonal subspaces of the largest dimension of \red{a certain self-dual code}. Section 5 lists some applications and examples, and Section 6 is the conclusion of the paper.
	
	\section{Preliminaries}
	\subsection{Classical error correcting codes}
	A binary linear code is a subspace of $\F_2^c$. If a binary linear code $\C$ has dimension $r$ and length $c$, then $\C$ is a binary linear code of parameters $[c,r]$. The vectors in a code are called codewords. Let $\x=(\seq{x})$ and $\y=(\seq{y})$ be two codewords in a binary linear code $\C$ of parameters $[c,r]$, we define two operations between $\x$ and $\y$ as follows:
	\[\begin{aligned}
		\x + \y = (x_1+y_1,\dots,x_c+y_c),\;
		\x \w \y = (x_1y_1,\dots,x_cy_c).
	\end{aligned}\]
	Let $\x$ and $\y$ be in $\F_2^c$.
	The (Hamming) weight of $\x$, denoted by ${\mbox{wt}}(\x)$, is the number of nonzero coordinates of $\x$, and the (Hamming) distance between $\x$ and $\y$ is defined as $d(\x,\y)={\mbox{wt}}(\x+\y)$. \red{Let $\C$ be a $[c,r]$ binary linear code. If the minimum distance of $\C$ is
	$d = d(\C) = \min\{{\mbox{wt}}(\c):\c\in\C\setminus\{\zero_c\}\}$ where $\zero_c$ is the all-zero vector of length $c$, then we say that $\C$ has parameters $[c,r,d]$.} Correspondingly, for a set (not necessarily linear) $H\subseteq \F_2^c$, we define
	\[
	{\mbox{wt}}(H) = \min\{ {\mbox{wt}}(\x):\x\in H\text{ and }\x\neq \zero_c \}.
	\]
	
	We use the notation $|\x|$ instead of ${\mbox{wt}}(\x)$ for the convenience of description, i.e., $|\x| = {\mbox{wt}}(\x)$. It is clear that $|\x\w\y|\pmod{2}$ is the Euclidean inner product of $\x$ and $\y$ over $\F_2$. We denote $\C^{\perp}$ as the dual of a binary code $\C$ of parameters $[c,r]$, where
	\[
	\C^{\perp} = \{ \c\in\F_2^c: |\c\w\x|\equiv 0 \pmod{2} \text{ for all } \x\in\C \}.
	\]
	Then a binary code $\C$ is self-dual if $\C=\C^{\perp}$, and self-orthogonal if $\C\subseteq \C^{\perp}$. We denote $\H=\rowspan\{H\}$ as the linear span of a set $H\subseteq \F_2^c$. Here are some simple properties.
	\begin{prop}
		If $\x,\y,\z$ are vectors in $\F_2^c$, then $\x\w \x = \x$, $\x\w \one_c=\x$, $\x\wedge(\y+\z) = (\x\wedge \y) + (\x\wedge \z)$, where $\one_c$ is the all-one vector of length $c$.
	\end{prop}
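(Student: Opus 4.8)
The plan is to reduce each of the three identities to a coordinatewise statement and then invoke the arithmetic of $\F_2$. Write $\x=(x_1,\dots,x_c)$, $\y=(y_1,\dots,y_c)$, $\z=(z_1,\dots,z_c)$ with each entry in $\F_2$. Since both operations $+$ and $\w$ on $\F_2^c$ are defined entirely componentwise, proving a vector identity amounts to proving the corresponding scalar identity in $\F_2$ in each of the $c$ positions; the vector statement is then obtained by assembling the $c$ scalar equalities back into a single tuple. So the whole proof is the remark that the three claimed identities are the coordinatewise instances of $x_i x_i = x_i$, $x_i\cdot 1 = x_i$, and $x_i(y_i+z_i)=x_iy_i+x_iz_i$.

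Concretely, I would dispatch the three parts in order. For $\x\w\x=\x$: since $\F_2=\{0,1\}$ and $0^2=0$, $1^2=1$, idempotency $x_i^2=x_i$ holds for every $i$, so the two vectors agree in each coordinate. For $\x\w\one_c=\x$: the $i$-th coordinate of $\one_c$ is $1$ and $x_i\cdot 1=x_i$ in any ring, in particular in $\F_2$, which gives the claim coordinatewise. For $\x\w(\y+\z)=(\x\w\y)+(\x\w\z)$: the distributive law $x_i(y_i+z_i)=x_iy_i+x_iz_i$ is valid in the field $\F_2$, so the identity holds in position $i$ for every $i$, hence as vectors.

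There is no real obstacle here; the only point requiring care — and it is purely bookkeeping — is to keep straight that addition is the $\F_2$ (XOR) addition and that $\w$ is the Hadamard/componentwise product, so that no carries or cross terms between distinct coordinates can occur. Once this componentwise viewpoint is fixed, all three assertions are immediate consequences of elementary identities in $\F_2$, and the proof is a single sentence per item.
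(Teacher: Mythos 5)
Your argument is correct and is exactly the coordinatewise reduction the paper has in mind; the paper states this proposition without proof as a ``simple property,'' so there is nothing to compare beyond noting that your componentwise verification of $x_i^2=x_i$, $x_i\cdot 1=x_i$, and $x_i(y_i+z_i)=x_iy_i+x_iz_i$ over $\F_2$ is the intended justification.
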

	
	\begin{prop} If $\x,\y,\z$ are vectors in $\F_2^c$, then
		$|\x\w (\y+\z)| \equiv |\x\w\y|+|\x\w\z|\pmod{2}$.
	\end{prop}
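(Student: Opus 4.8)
The plan is to reduce the statement to the elementary fact that Hamming weight is additive modulo $2$ under vector addition over $\F_2$. First I would apply the distributive law established in the preceding proposition to rewrite $\x\w(\y+\z)$ as $(\x\w\y)+(\x\w\z)$. Then the claim to be proved becomes $|\u+\v|\equiv|\u|+|\v|\pmod 2$ for the two particular vectors $\u=\x\w\y$ and $\v=\x\w\z$; since this identity holds for \emph{arbitrary} $\u,\v\in\F_2^c$, I would state and prove it in that generality and then specialize.

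To prove $|\u+\v|\equiv|\u|+|\v|\pmod 2$, I would argue coordinatewise, reading each coordinate as an integer in $\{0,1\}$. The $i$-th coordinate of $\u+\v$ (addition in $\F_2$) equals $u_i+v_i-2u_iv_i$ as an integer, so summing over all coordinates gives the inclusion--exclusion identity $|\u+\v|=|\u|+|\v|-2\,|\u\w\v|$, which just records that the support of $\u+\v$ is the symmetric difference of the supports of $\u$ and $\v$. Reducing modulo $2$ annihilates the term $2\,|\u\w\v|$ and leaves $|\u+\v|\equiv|\u|+|\v|\pmod 2$. Substituting $\u=\x\w\y$ and $\v=\x\w\z$ and using the rewriting from the first step then completes the argument.

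The only point that requires any care --- and the sole ``obstacle'', such as it is --- is to keep the two arithmetics distinct: the addition inside $|\x\w(\y+\z)|$ takes place in $\F_2^c$, whereas the weight $|\cdot|$ is an ordinary nonnegative integer, so the cross term $2\,|\u\w\v|$ genuinely appears over $\Z$ before it is discarded mod $2$. No deeper idea is needed; the proposition is a one-line consequence of the symmetric-difference description of addition over $\F_2$, and could alternatively be obtained by a trivial induction on the number of nonzero coordinates of $\z$.
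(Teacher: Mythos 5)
Your argument is correct. The paper actually states this proposition without proof (it is one of the three ``simple properties'' listed in the Preliminaries), so there is nothing to compare against, but your derivation is exactly the standard one: apply the distributivity $\x\w(\y+\z)=(\x\w\y)+(\x\w\z)$ from the preceding proposition, then invoke the inclusion--exclusion identity $|\u+\v|=|\u|+|\v|-2|\u\w\v|$ over $\Z$ and reduce modulo $2$. Your remark about keeping the $\F_2$ addition inside the weight distinct from the integer arithmetic outside it is the right thing to be careful about, and your coordinatewise justification $u_i+v_i-2u_iv_i$ handles it cleanly.
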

	
	\begin{prop}
		If $\C_1\subseteq \C_2\subseteq \F_2^c$ are two binary linear codes, then $(\C_1^{\perp})^\perp = \C_1$, $\C_2^{\perp}\subseteq \C_1^{\perp}$ and $d(\C_1)\geq d(\C_2)$.
	\end{prop}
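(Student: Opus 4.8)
The plan is to treat the three assertions separately, since they are logically independent once one has the dimension formula $\dim\C+\dim\C^\perp=c$ for any binary linear code $\C\subseteq\F_2^c$; so I would establish that formula first. Fix a generator matrix $G$ of $\C$, that is, an $r\times c$ binary matrix of (full row) rank $r=\dim\C$ whose rows span $\C$. By the observation preceding the proposition, $|\x\w\y|\bmod 2$ is the Euclidean inner product, and by Proposition 2 the assignment $\x\mapsto G\x^{\top}$ is an $\F_2$-linear map $\F_2^c\to\F_2^r$ whose kernel is exactly $\C^\perp$. Since $G$ has rank $r$ this map is onto, so rank--nullity yields $\dim\C^\perp=c-r$.

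Given this, $(\C_1^\perp)^\perp=\C_1$ follows quickly. The inclusion $\C_1\subseteq(\C_1^\perp)^\perp$ is immediate from the definition of the dual: every $\x\in\C_1$ satisfies $|\x\w\c|\equiv 0\pmod 2$ for all $\c\in\C_1^\perp$. Applying the dimension formula twice gives $\dim(\C_1^\perp)^\perp=c-\dim\C_1^\perp=c-(c-\dim\C_1)=\dim\C_1$, and a subspace contained in another subspace of the same (finite) dimension must coincide with it.

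For $\C_2^\perp\subseteq\C_1^\perp$ I would argue directly from the definitions: if $\c\in\C_2^\perp$ then $|\c\w\x|\equiv 0\pmod 2$ for every $\x\in\C_2$, hence in particular for every $\x\in\C_1$ because $\C_1\subseteq\C_2$, so $\c\in\C_1^\perp$. Finally, for $d(\C_1)\ge d(\C_2)$, observe that every nonzero codeword of $\C_1$ is also a nonzero codeword of $\C_2$, so $\{|\c|:\c\in\C_1\setminus\{\zero_c\}\}\subseteq\{|\c|:\c\in\C_2\setminus\{\zero_c\}\}$; the minimum over a subset is at least the minimum over the larger set, which is the assertion (the degenerate case $\C_1=\{\zero_c\}$ being vacuous).

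There is no substantive obstacle here; the only point deserving a little care is not to take the identity $\dim\C+\dim\C^\perp=c$ for granted over $\F_2$, where a code can be non-trivially self-orthogonal and there is no orthogonal direct-sum decomposition of $\F_2^c$ — the kernel-of-$G$ computation above is the clean way to see that it nonetheless holds, and it is the one place where Proposition 2 is actually used.
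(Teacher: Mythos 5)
The paper states this proposition without proof, treating it as elementary background (``Here are some simple properties.''), so there is no proof of the paper's to compare against. Your argument is the standard textbook one and is correct; in particular you are right to flag that over $\F_2$ the identity $\dim\C+\dim\C^{\perp}=c$ cannot be read off from an orthogonal direct-sum decomposition of $\F_2^c$ (self-orthogonal codes exist), and the rank--nullity computation applied to $\x\mapsto G\x^{\top}$, with the kernel identified as $\C^{\perp}$ via the bilinearity supplied by Proposition~2, is the clean way to obtain it.
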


\subsection{Triorthogonality: Matrices and codes}
To introduce triorthogonal codes, we introduce the following concepts.
\begin{de}[Triorthogonal matrices in \cite{basis paper 1}] \label{tri matrix}
	A binary matrix $G = (G_{ij})_{m\times n}$ of size $m$-by-$n$ is called triorthogonal if \\
	(1) for all pairs $(a,b)$ that satisfy $1\leq a < b \leq m$, we have
	\[
	\modeq{\sum_{j=1}^n G_{aj}G_{bj}}{0}, {\mbox{ and}}
	\]
	(2) for all triples $(a,b,c)$ that satisfy $1\leq a < b < c \leq m$, we have
	\[
	\modeq{\sum_{j=1}^n G_{aj}G_{bj}G_{cj}}{0}.
	\]
\end{de}

Given a full-rank triorthogonal matrix $G_{m\times n}$ of size $m$-by-$n$, by row permutation, we can always divide the matrix $G$ into two parts, namely
\[G\xrightarrow{\text{Row Permutation}}
\left[ \begin{array}{c}
	G_{1}\\
	\hline
	G_{0}
\end{array} \right]
\begin{array}{l}
	\} \text{ odd weight} \\
	\} \text{ even weight}
\end{array},
\]
where all the rows with odd weights in $G$ form $G_1$, and the remaining rows form $G_0$. In this paper, we will always assume that the first $k$ rows of $G$ have odd-weight, i.e., $\sum_{j=1}^n G_{aj} \equiv 1 \pmod{2}$ for $1\leq a\leq k$ and the remaining rows have even-weight, i.e., $\sum_{j=1}^n G_{bj} \equiv 0 \pmod{2}$ for $k+1\leq b\leq m$. Now we give the concept of triorthogonal codes.

\begin{de}[Triorthogonal codes in \cite{basis paper 1}] \label{tri codes}
	Let $G$ be a binary triorthogonal matrix of size $m$-by-$n$ which has $k$ odd-weight rows. Let $\G_0$ denote the span of all the even weight rows of $G$, and $\G$ denote the span of all the rows of $G$. Then a quantum CSS code, by letting $\G_0$ correspond to $X$-stabilizers, and $\G^{\perp}$ to $Z$-stabilizers, is called the triorthogonal code.
\end{de}

	In \cite{basis paper 2}, it is  shown that such a triorthogonal code has $k$ logical qubits, i.e., encodes $k$ logical qubits into $n$ qubits. We are interested in the minimum weight $d_Z$ of any nontrivial $Z$-logical operators of the triorthogonal codes, that are related to a triorthogonal matrix $G$. We call such a number the distance of the matrix $G$, i.e.,
\[
d_Z = \min\{{\mbox{wt}}(c):c\in\G_0^\perp\setminus \G^\perp\} = {\mbox{wt}}(\G_0^\perp\setminus \G^\perp).
\]
In the rest of this paper, a triorthogonal code has parameters $[[n,k,d_Z]]$ if this code encodes $k$ logical qubits into $n$ qubits, and the distance of the corresponding triorthogonal matrix is $d_Z$. Furthermore, for the convenience of description, given a triorthogonal matrix $G$, we use the notation \red{$\TriCode(G) =
\CSS(\G_0;\G)$} to describe the triorthogonal code determined by $G$, where $\G_0$ is the span of all the even-weight rows of $G$ and $\G$ is the span of all the rows of $\G$.

\subsection{Triorthogonal spaces and triorthogonal matrices}
We introduce a class of linear subspaces of $\F_2^c$ that is closely related to triorthogonal matrices.
\begin{de}[\cite{data paper}]\label{tri space}
	A subspace $\H\subseteq \F_2^c$ is triorthogonal if for any three vectors $\x,\y,\z\in\H$, we have $\modeq{|\x\w\y\w\z|}{0}$. If $\H$ contains all-one vector $\one_c$, then $\H$ is called unital.
\end{de}
For each linear subspace, a matrix whose rows form a basis of this subspace can describe it completely. Conversely, we can use the linear span of all rows of some special matrices to describe orthogonal or triorthogonal subspace. We call such a matrix the generator matrix of the corresponding subspace. Obviously, all rows of any generator matrix of a triorthogonal subspace have even weights.

Now we would use the generator matrices of the triorthogonal subspaces to gain some triorthogonal matrices which have odd-weight rows. For a linear space $\C$ of parameters $[c,r]$ and any positive integer $k$ $(k\leq r)$, there exists a generator matrix $G$ of $\C$ such that
\[G = \left[ \begin{array}{c|c}
	I_r &P \\
\end{array}\right] =
\left[ \begin{array}{c|c|c}
	I_k &O &P_1 \\
	\hline
	O   &I_{r-k} &P_0 \\
\end{array}\right],
\]
where $I_k$ is the identity matrix of size $k$-by-$k$, and $O$ is the all-zero matrix of suitable size. Then \[ \left[ \begin{array}{c}
	G_1 \\
	\hline
	G_0
\end{array}\right] =
\left[ \begin{array}{c|c}
	O	&P_1 \\
	\hline
	I_{r-k} &P_0\\
\end{array}\right]\]
is a triorthogonal matrix, where the rows of $G_1$ have odd weights and the rows of $G_0$ have even weights.

In the following sections, we will follow this line of thought and use some classical coding theory to construct some triorthogonal codes.

\section{Triorthogonal codes from matrices}
In this section, we will construct new triorthogonal matrices from old ones, and the parameters of the triorthogonal codes corresponding to these new matrices are discussed.
\subsection{Shortening and extending}
Given a triorthogonal matrix $G$ with $k$ odd-weight rows, we would like to shorten it on one position to gain new triorthogonal matrices. For a binary vector $\c$ of length $n$ and a given $i$ $(1 \le i \le n)$, if there is a 0 in the $i$-th position of $\c$,  we denote it by $(\c |_i 0)$. For a set $S\subseteq \F_2^n$, we denote the number of elements in $S$ as $|S|$. Shortening $S$ on the $i$-th coordinate position means the set of vectors of length $n-1$ obtained by removing the $i$-th column of $(\x|_i0)\in S$, that is,
\[
S_i = \{\x:(\x|_i0)\in S\} \text{ for } 1\leq i\leq n.
\]
\begin{lm}
	Given a full-rank triorthogonal matrix $G$ of size $m$-by-$n$, we denote the $i$-th row of $G$ as $\g_i$ and $S=\{\g_1,\dots,\g_m\}$. If $S_i=\{\x:(\x|_i 0)\in S\}$, then the matrix $G_i$ whose rows are the elements in $S_i$ is triorthogonal.
\end{lm}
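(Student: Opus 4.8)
The plan is to observe that, after deleting the $i$-th coordinate, the rows of $G_i$ are nothing but a sub-collection of the rows of $G$ with a coordinate that is identically zero thrown away, and that discarding such a coordinate changes neither Hamming weights nor the weights of componentwise products. Triorthogonality of $G_i$ will then follow at once from that of $G$ via Definition~\ref{tri matrix}. To set this up, let $A=\{a:1\le a\le m,\ G_{ai}=0\}$ be the set of indices of rows of $G$ having a $0$ in column $i$; for $a\in A$ write $\g_a=(\x_a|_i0)$, so that $S_i=\{\x_a:a\in A\}$ and the rows of $G_i$ are exactly the vectors $\x_a$, $a\in A$. Since two vectors of $\F_2^n$ that agree on every coordinate except possibly the $i$-th and that both vanish at coordinate $i$ are in fact equal, the map $a\mapsto\x_a$ is injective on $A$; hence distinct rows of $G_i$ arise from distinct rows of $G$.

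Next I would record the elementary identities
\[
\g_a\w\g_b=\bigl((\x_a\w\x_b)\,|_i\,0\bigr),\qquad
\g_a\w\g_b\w\g_c=\bigl((\x_a\w\x_b\w\x_c)\,|_i\,0\bigr)
\]
for $a,b,c\in A$, which hold because all the $\g_a$ agree with the $\x_a$ outside coordinate $i$ and are zero there. A coordinate equal to $0$ contributes nothing to the Hamming weight, so $|\x_a\w\x_b|=|\g_a\w\g_b|$ and $|\x_a\w\x_b\w\x_c|=|\g_a\w\g_b\w\g_c|$. Because $G$ is triorthogonal, the right-hand sides are $\equiv 0\pmod 2$ whenever $a,b$ (respectively $a,b,c$) are distinct indices; combined with the injectivity noted above, this means that for any two distinct rows of $G_i$ the pairwise condition holds and for any three distinct rows the triple condition holds. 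That is exactly conditions (1) and (2) of Definition~\ref{tri matrix} for $G_i$, so $G_i$ is triorthogonal.

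There is essentially no hard step: the only points requiring a little care are the bookkeeping that the shortened rows stay pairwise distinct, so that the "distinct index" hypotheses of Definition~\ref{tri matrix} transfer correctly from $G$ to $G_i$, and the (trivial) remark that weights of the relevant AND-products are unaffected by deleting an all-zero coordinate. One should also note that $G_i$ need not be full-rank even when $G$ is; but the statement only asserts triorthogonality, so this causes no difficulty, and in any case a zero row (should one occur) satisfies both conditions vacuously.
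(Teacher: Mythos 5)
Your proof is correct and follows essentially the same approach as the paper: lift rows of $G_i$ back to the corresponding rows of $G$ having a $0$ in column $i$, note that deleting that all-zero coordinate does not change the Hamming weight of pairwise or triple wedge products, and conclude triorthogonality of $G_i$ from that of $G$. The extra bookkeeping you provide (injectivity of the restriction map, the remark about possible zero rows) is harmless and slightly more careful than the paper's one-line computation, but it is not a different argument.
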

\begin{proof}
	For any $\x',\y'$ $(\x'\neq\y')$ in $S_i$, let $\x = (\x'|_i0)$ and $\y=(\y'|_i0)$. Then we have $\x,\y\in S$ and
	\[\begin{aligned}
		|\x\w\y| = \sum_{j=1}^n x_jy_j =  \sum_{j=1,j\neq i}^n x_jy_j = |\x'\w\y'| \equiv 0\pmod{2}.
	\end{aligned}\]
	For any distinct three elements $\x',\y',\z'$  in $S_i$, let $\x = (\x'|_i0)$, $\y=(\y'|_i0)$, $\z=(\z'|_i0)$ and $\x,\y,\z\in S$. Then we have
	\[\begin{aligned}
		|\x\w\y\w\z| = \sum_{j=1}^n x_jy_jz_j =  \sum_{j=1,j\neq i}^n x_jy_jz_j = |\x'\w\y'\w\z'| \equiv 0\pmod{2}.
	\end{aligned}\]
	Therefore, $G_i$ is triorthogonal.
\end{proof}

\begin{thm}
	Given a full-rank triorthogonal matrix $G$ of size $m$-by-$n$ with $k$ $(k>0)$ odd-weight rows, and the parameters of $\TriCode(G)$ are $[[n,k]]$. Let
	\[\begin{aligned}
		S &= \{ \g : \g\text{ is the row of }G \},\\
		S^1 &= \{\g_1: \g_1 \text{ is the odd-weight row of }G\},\text{ and }\\
		S^0 &= \{\g_0: \g_0 \text{ is the even-weight row of }G\}.\\
	\end{aligned}\]
	Choosing $1\leq i\leq n$ to satisfy $S^1_i\neq \emptyset$ and $S^0_i\neq\emptyset$ $(S^1_i\cup S^0_i = S_i)$, if all elements of $S_i$ as rows form a full-rank matrix $G_i$ of size $|S_i|$-by-$(n-1)$, we have a triorthogonal code $\TriCode(G_i)$ of parameters $[[n-1,|S_i^1|]]$. Here, $S_i^1 = \{\x:(\x|_i0)\in S^1\}$, $S_i^0 = \{\x:(\x|_i0)\in S^0\}$ and $S_i = \{\x:(\x|_i0)\in S\}$.
\end{thm}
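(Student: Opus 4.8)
The plan is to reduce everything to the previous lemma together with the fact recalled above (from \cite{basis paper 2}) that a full-rank triorthogonal matrix with $k$ odd-weight rows yields a triorthogonal code encoding $k$ logical qubits. First I would apply the previous lemma verbatim: $G$ is a full-rank triorthogonal matrix, $S$ is its set of rows, and $G_i$ is the matrix whose rows are the elements of $S_i=\{\x:(\x|_i0)\in S\}$, so the lemma gives that $G_i$ is triorthogonal. By hypothesis $G_i$ is full-rank of size $|S_i|$-by-$(n-1)$, so $\TriCode(G_i)$ is well-defined with $n-1$ physical qubits, and it remains only to count its logical qubits.

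Next I would track Hamming weights through the shortening. If $\v=(\v'|_i0)$ then $|\v|=|\v'|$, since passing from $\v$ to $\v'$ merely deletes a coordinate equal to $0$; in particular the parity of the weight is unchanged. Hence a row of $G$ of odd (resp.\ even) weight having a $0$ in position $i$ becomes, after deleting that coordinate, a row of $G_i$ of odd (resp.\ even) weight. Since $S^1_i$ and $S^0_i$ are exactly the shortenings of the odd-weight and even-weight rows of $G$ that lie (have a $0$) in position $i$, and $S_i=S^1_i\cup S^0_i$ is a disjoint union, $G_i$ has precisely $|S^1_i|$ odd-weight rows and $|S^0_i|$ even-weight rows. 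The hypotheses $S^1_i\neq\emptyset$ and $S^0_i\neq\emptyset$ ensure that $G_i$ genuinely has rows of each parity, and in particular $|S^1_i|\geq 1$, matching the convention that a triorthogonal code has a positive number of logical qubits.

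Applying the recalled result now to $G_i$ itself---a full-rank triorthogonal matrix of size $|S_i|$-by-$(n-1)$ with $|S^1_i|$ odd-weight rows---we conclude that $\TriCode(G_i)$ encodes $|S^1_i|$ logical qubits into $n-1$ qubits, i.e., it has parameters $[[n-1,|S^1_i|]]$. As a consistency check one can compute the logical dimension directly: full rank of $G_i$ makes its $|S_i|$ rows independent, so the span of all rows has dimension $|S_i|$ while the span of the even-weight rows (a subset of these rows) has dimension $|S^0_i|$, giving a logical dimension $|S_i|-|S^0_i|=|S^1_i|$.

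I expect the only real subtlety to be the rank bookkeeping: shortening can a priori create linear dependencies among the surviving rows, so the hypothesis that $G_i$ has rank $|S_i|$ is exactly what transports the odd-row count to the logical-qubit count. (That the $|S_i|$ rows of $G_i$ are at least pairwise distinct is automatic: two distinct rows of $G$ with a $0$ in position $i$ must differ in some other coordinate, so their shortenings differ.) The rest is a routine unwinding of the definitions of shortening, triorthogonality, and $\TriCode$.
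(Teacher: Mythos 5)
Your proof is correct and takes essentially the same approach as the paper's (which is extremely terse, essentially a one-liner). You spell out the two implicit steps the paper skips: that shortening at a zero coordinate preserves the parity of the weight, so $G_i$ has exactly $|S^1_i|$ odd-weight rows, and that the number of logical qubits is then read off from the recalled Bravyi--Haah fact about full-rank triorthogonal matrices.
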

\begin{proof}
	Since $G_i$ is a full-rank triorthogonal matrix of size $|S_i|$-by-$(n-1)$ and $G_i$ has $|S_i^1|$ odd-weight rows, we obtain this result.
\end{proof}

Let $ \e_i=(0, \dots,0,1,0, \dots,0)^T$ be the column vector of length $n$, where 1 occurs on the $i$-th position $(1 \leq i \leq n$). One can also extend a triorthogonal matrix as follows.

\begin{lm} \label{lem-ext}
	Let $G$ be a binary triorthogonal matrix of size $m$-by-$n$.  Then
	\[
	G'=\left[\begin{array}{c|c}
		\e_i  &G
	\end{array}
	\right]
	\]
	is a triorthogonal matrix of size $m$-by-$(n+1)$, \red{where $1\leq i\leq m$.}
\end{lm}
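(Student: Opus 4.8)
The plan is to verify the two defining conditions of Definition~\ref{tri matrix} for $G'$ directly, reducing each of them to the corresponding condition already known for $G$. Index the columns of $G'$ by $0,1,\dots,n$, where column $0$ is the prepended vector $\e_i$; thus $G'_{a0}=1$ if $a=i$ and $G'_{a0}=0$ otherwise, while $G'_{aj}=G_{aj}$ for $1\le j\le n$. The only new contributions to any weight or inner-product sum come from column $0$, and because $\e_i$ has exactly one nonzero coordinate (in row $i$), any product of two or three entries taken from \emph{distinct} rows in that column automatically vanishes. This is the whole idea of the proof.

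Concretely, first I would treat the orthogonality condition. Fix a pair $a<b$. Then
\[
\sum_{j=0}^{n} G'_{aj}G'_{bj} = G'_{a0}G'_{b0} + \sum_{j=1}^{n} G_{aj}G_{bj},
\]
and $G'_{a0}G'_{b0}$ is nonzero only if $a=b=i$, which is impossible since $a<b$; hence the sum equals $\sum_{j=1}^{n} G_{aj}G_{bj}\equiv 0\pmod{2}$ by triorthogonality of $G$. Next I would treat the triple condition: for $a<b<c$ the analogous computation produces an extra term $G'_{a0}G'_{b0}G'_{c0}$, which again vanishes because $a,b,c$ cannot all equal $i$, leaving $\sum_{j=1}^{n} G_{aj}G_{bj}G_{cj}\equiv 0\pmod{2}$, again by triorthogonality of $G$. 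Both conditions of Definition~\ref{tri matrix} therefore hold for $G'$, which has size $m$-by-$(n+1)$, completing the argument.

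I do not expect any genuine obstacle in this lemma; the only point that requires a moment's care is the bookkeeping observation that $a<b$ (resp.\ $a<b<c$) forces the new-column term to be $0$. One side remark worth recording, though it is not a difficulty, is that prepending $\e_i$ increases $|\g_i|$ by one and hence flips the parity of the $i$-th row of $G$; this changes the number of odd-weight rows (and so the parameter $k$ of the associated triorthogonal code) but has no bearing on triorthogonality, which is the only thing claimed here.
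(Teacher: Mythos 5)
Your proof is correct, and since the paper states Lemma~\ref{lem-ext} without proof, the direct verification you give --- reducing each defining sum for $G'$ to the corresponding sum for $G$ by observing that the prepended column contributes nothing to any product over two or three \emph{distinct} rows --- is exactly the intended (and only reasonable) argument. Your closing remark about the parity flip of row $i$ is also apt, as it is precisely what drives the change from $k$ to $k-1$ in Theorem~\ref{ext-CSS}.
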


\begin{thm}\label{ext-CSS}
	Let $G$ be a full-rank binary triorthogonal matrix of size $m$-by-$n$, and the first $k$ $(k>0)$ rows of $G$ have odd weight and the remaining rows have even weight. Let $G' = \left[ \begin{array}{c|c}
		\e_i &G
	\end{array}\right]$ $(1\leq i\leq k)$. If the triorthogonal code $\TriCode(G)$ has parameters $[[n,k]]$, then $\TriCode(G')$ has parameters $[[n+1,k-1]]$.
\end{thm}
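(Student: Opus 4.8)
The plan is to reduce the claim to three routine verifications about $G'=[\,\e_i\mid G\,]$ and then apply the convention defining the parameters of a triorthogonal code. First I recall that, by Definition~\ref{tri codes} together with the result of \cite{basis paper 2}, a full-rank binary triorthogonal matrix of size $m$-by-$N$ with exactly $t$ odd-weight rows determines a triorthogonal code of parameters $[[N,t]]$. Hence it suffices to prove that $G'$ is (a) triorthogonal, (b) full-rank of size $m$-by-$(n+1)$, and (c) a matrix with exactly $k-1$ odd-weight rows.

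Part (a) is immediate from Lemma~\ref{lem-ext}: the hypothesis $1\le i\le k$ gives $1\le i\le m$, so that lemma applies verbatim and $G'$ is triorthogonal. For part (b), note that deleting the first column of $G'$ returns $G$; thus $G$ is a column-submatrix of $G'$, so $\rank(G')\ge\rank(G)=m$, while $G'$ has only $m$ rows, whence $\rank(G')=m$ and $G'$ has size $m$-by-$(n+1)$.

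For part (c), observe that the first column of $G'$ is $\e_i$, so passing from $G$ to $G'$ increases the Hamming weight of row $i$ by exactly $1$ and leaves the weight of every other row unchanged. Since $i\le k$, row $i$ is one of the $k$ odd-weight rows of $G$ and it becomes an even-weight row of $G'$; the remaining $k-1$ odd-weight rows of $G$ stay odd-weight in $G'$, and the even-weight rows stay even-weight. Therefore $G'$ has exactly $k-1$ odd-weight rows, which after a harmless row permutation can be taken to be the first $k-1$ rows, in accordance with the standing convention. Combining (a)--(c) with the parameter convention recalled above yields that $\TriCode(G')$ has parameters $[[n+1,k-1]]$.

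There is no genuinely hard step; the argument is essentially bookkeeping, the only point deserving emphasis being that the restriction $i\le k$ in the hypothesis is exactly what forces the odd-weight count to drop by one: if instead $i>k$, the same computation would turn an even-weight row into an odd-weight one and give $k+1$ odd-weight rows, hence a different parameter change. The degenerate case $k=1$ is included, with $\TriCode(G')$ then having $0$ logical qubits, and the argument above is uniform over all $k>0$.
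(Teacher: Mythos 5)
Your argument is correct and is essentially the one the paper intends: the paper states Theorem~\ref{ext-CSS} without a written proof, relying on Lemma~\ref{lem-ext} for triorthogonality of $G'$ and on the parameter convention (a full-rank triorthogonal $m\times N$ matrix with $t$ odd-weight rows yields a $[[N,t]]$ code) together with the observation that appending the column $\e_i$ with $i\le k$ flips exactly one odd-weight row to even. Your (a)--(c) decomposition, including the rank check and the count of odd-weight rows, is precisely this bookkeeping made explicit.
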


\begin{ex}\label{eg-short-2}
	\red{In \cite{basis paper 2}, the matrix
	\[
	G = \left[\begin{array}{cccccccccccccc}
		1 &1 &1 &1 &1 &1 &1 &0 &0 &0 &0 &0 &0 &0  \\
		0 &0 &0 &0 &0 &0 &0 &1 &1 &1 &1 &1 &1 &1  \\
		1 &0 &1 &0 &1 &0 &1 &1 &0 &1 &0 &1 &0 &1  \\
		0 &1 &1 &0 &0 &1 &1 &0 &1 &1 &0 &0 &1 &1  \\
		0 &0 &0 &1 &1 &1 &1 &0 &0 &0 &1 &1 &1 &1  \\
	\end{array}
	\right]_{5\times 14}
	\]
	is triorthogonal and $G$ can generate a triorthogonal code of parameters $[[14,2,2]]$.} By extending $G$ by one column, we get a triorthogonal matrix $G_3$:
	\[
	G_3 = \left[\begin{array}{c|cccccccccccccc}
		0&	1 &1 &1 &1 &1 &1 &1 &0 &0 &0 &0 &0 &0 &0  \\
		1&	0 &0 &0 &0 &0 &0 &0 &1 &1 &1 &1 &1 &1 &1  \\
		0&	1 &0 &1 &0 &1 &0 &1 &1 &0 &1 &0 &1 &0 &1  \\
		0&	0 &1 &1 &0 &0 &1 &1 &0 &1 &1 &0 &0 &1 &1  \\
		0&	0 &0 &0 &1 &1 &1 &1 &0 &0 &0 &1 &1 &1 &1  \\
	\end{array}
	\right]_{5\times 15},
	\]
	and $G_3$ can generate a triorthogonal code of parameters $[[15,1,3]]$.
\end{ex}

Here, we give a construction of the triorthogonal spaces, which is obtained by shortening the triorthogonal spaces.
\begin{lm}
	Let $\C$ be a linear code of parameters $[n,k,d]$. If $\C$ is a triorthogonal space, then $\C_i = \{\x:(\x|_i0)\in\C\}$ is also a triorthogonal space, and $\C_i$ is a linear code of parameters $[n-1,k-1,d'\geq d]$. Here $1\leq i\leq n+1$.
\end{lm}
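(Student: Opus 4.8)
The plan is to verify the three assertions in turn: that $\C_i$ is again a linear subspace, that it is triorthogonal, and that its parameters are $[n-1,k-1,d'\ge d]$. Write $\C^{(i)} = \{\c\in\C : c_i = 0\}$ for the subcode of $\C$ consisting of the codewords with a $0$ in coordinate $i$, and let $\pi_i\colon \F_2^n\to\F_2^{n-1}$ be the coordinate-deletion map removing the $i$-th entry. By definition $\C_i = \pi_i(\C^{(i)})$, and $\pi_i$ restricted to $\C^{(i)}$ is injective (it only deletes a coordinate that is already $0$ on $\C^{(i)}$). Hence $\C_i$ is a linear subspace of $\F_2^{n-1}$, isomorphic to $\C^{(i)}$, and each codeword of $\C_i$ has the same Hamming weight as its unique preimage in $\C^{(i)}$.

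For triorthogonality I would argue exactly as in the proof of the shortening lemma for triorthogonal matrices above: given any three vectors $\x',\y',\z'\in\C_i$, set $\x=(\x'|_i0)$, $\y=(\y'|_i0)$, $\z=(\z'|_i0)$, which lie in $\C^{(i)}\subseteq\C$. Since the $i$-th coordinate of each is $0$, it contributes nothing to the triple product, so $|\x'\w\y'\w\z'| = \sum_{j\neq i} x_jy_jz_j = |\x\w\y\w\z|\equiv 0\pmod 2$ because $\C$ is triorthogonal. Thus $\C_i$ is a triorthogonal space (taking two of the three vectors equal recovers the pairwise and even-weight conditions, but Definition~\ref{tri space} only requires the triple one).

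It remains to pin down the parameters. The length is $n-1$. For the dimension, consider the linear functional $\varphi\colon\C\to\F_2$ given by $\varphi(\c)=c_i$; its kernel is $\C^{(i)}$. When $\varphi$ is surjective --- i.e.\ coordinate $i$ is not identically zero on $\C$ --- we get $\dim\C^{(i)} = k-1$, hence $\dim\C_i = k-1$. For the minimum distance, a nonzero $\x'\in\C_i$ corresponds to a nonzero $\x=(\x'|_i0)\in\C$ with $|\x'|=|\x|\ge d$; since this holds for every nonzero codeword, $d' = d(\C_i)\ge d$.

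The only point that genuinely needs a word of care is the dimension count: if $i$ happens to be a zero coordinate of $\C$ (all codewords vanish there), then $\C^{(i)}=\C$ and $\dim\C_i=k$, not $k-1$, so the statement tacitly assumes coordinate $i$ lies in the support of $\C$ (automatic, for instance, when $\C$ is unital). Apart from this caveat the proof is a direct transcription of the shortening computation already carried out for triorthogonal matrices, so I do not expect any real obstacle.
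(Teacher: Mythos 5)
The paper states this lemma without proof; the authors evidently regard it as a routine adaptation of the shortening lemma they do prove for triorthogonal \emph{matrices} a few lines earlier. Your proof is a correct and complete version of that adaptation: the decomposition $\C_i=\pi_i(\C^{(i)})$, the injectivity of $\pi_i$ on $\C^{(i)}$, and the coordinate-by-coordinate computation of $|\x'\w\y'\w\z'|$ all check out, and you are right that Definition~\ref{tri space} only imposes the triple condition, from which the pairwise and even-weight conditions follow by repeating arguments since $\x\w\x=\x$.

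Your closing caveat is in fact the one substantive thing to say about this lemma. As written, the claim $\dim\C_i=k-1$ is simply false when coordinate $i$ is identically zero on $\C$ (there $\C^{(i)}=\C$, $\pi_i$ is still injective, and $\dim\C_i=k$, with $d'=d$). So the lemma tacitly assumes coordinate $i$ lies in the support of $\C$; this is automatic if $\C$ is unital (which is the case that matters for the triorthogonal-code construction, since $\one_n\in\C$ forces every coordinate into the support), but the hypothesis should really be stated. You identified this precisely, and your explanation via the kernel of the evaluation functional $\varphi(\c)=c_i$ is the cleanest way to see it. One trivial remark you might also flag: the bound ``$1\le i\le n+1$'' in the lemma is a typo for $1\le i\le n$, since $\C$ has length $n$.
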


\subsection{Propagation rules}
We use some simple propagation rules to generate more triorthogonal matrices.

\begin{prop}\label{direct sum 1}
	If $A$ and $B$ are two binary triorthogonal matrices with the same number of rows, then
	\[G=\left[ \begin{array}{c|c}
		A &B\\
	\end{array}\right] \]
	is also a triorthogonal matrix.
\end{prop}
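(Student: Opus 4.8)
The plan is to verify the two defining conditions of Definition~\ref{tri matrix} directly for the concatenated matrix $G=[A\mid B]$, reducing each condition to the hypotheses that $A$ and $B$ are individually triorthogonal. Write $n_A$ and $n_B$ for the number of columns of $A$ and $B$ respectively, and index the columns of $G$ so that columns $1,\dots,n_A$ come from $A$ and columns $n_A+1,\dots,n_A+n_B$ come from $B$. Since $A$ and $B$ have the same number of rows $m$, the row index set $\{1,\dots,m\}$ is common to both, so the concatenation is well-defined.

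First I would check the pairwise condition. Fix $1\le a<b\le m$. The key observation is that the inner product over the columns of $G$ splits as a sum over the $A$-block plus a sum over the $B$-block:
\[
\sum_{j=1}^{n_A+n_B} G_{aj}G_{bj} = \sum_{j=1}^{n_A} A_{aj}A_{bj} + \sum_{j=1}^{n_B} B_{aj}B_{bj}.
\]
Each summand is $\equiv 0 \pmod 2$ because $A$ and $B$ are triorthogonal, so the total is $\equiv 0 \pmod 2$. The triple condition is handled in exactly the same way: for $1\le a<b<c\le m$,
\[
\sum_{j=1}^{n_A+n_B} G_{aj}G_{bj}G_{cj} = \sum_{j=1}^{n_A} A_{aj}A_{bj}A_{cj} + \sum_{j=1}^{n_B} B_{aj}B_{bj}B_{cj} \equiv 0 \pmod 2,
\]
again because each block contributes $0$ modulo $2$ by hypothesis. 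This establishes that $G$ is triorthogonal.

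There is essentially no obstacle here: the whole argument rests on the fact that horizontal concatenation turns the column-indexed sums appearing in Definition~\ref{tri matrix} into sums of the corresponding $A$-sums and $B$-sums, which is immediate from the block structure. Equivalently, one could phrase it in the language of Section~2.1: if $\g$ and $\h$ are the rows of $G$, then $\g = (\g^A \mid \g^B)$ with $\g^A$ a row of $A$ and $\g^B$ a row of $B$, and $|\g\w\h| = |\g^A\w\h^A| + |\g^B\w\h^B|$, with a similar splitting for the wedge of three rows; both halves vanish modulo $2$. The only minor point worth stating explicitly is that no hypothesis on the relationship between rows of $A$ and rows of $B$ is needed—triorthogonality is preserved coordinate-blockwise—so the result holds for any pair of triorthogonal matrices with a common row count.
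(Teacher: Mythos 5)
Your proof is correct and takes essentially the same approach as the paper: both split the pairwise and triple column sums of $G=[A\mid B]$ into the $A$-block and $B$-block contributions and apply triorthogonality of $A$ and $B$ to each. The paper phrases this in the wedge-product notation of Section~2.1, which you also note as an equivalent formulation at the end.
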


\begin{proof}
	Let the $i$-th rows of $G,A,B$ be $\g_i,\af_i,\be_i$, respectively. For any $1\leq i<j\leq m$, where $m$ is the number of rows in $G$, we have
	\[\begin{aligned}
		|\g_i\w \g_j| &= |\af_i\w \af_j| + |\be_i\w \be_j| \equiv 0\pmod{2}. \\
	\end{aligned}\]
	For any $1\leq i<j<k\leq m$, we have
	\[
	|\g_i\w \g_j\w \g_k| = |\af_i\w \af_j \w \af_k| + |\be_i\w \be_j\w \be_k| \equiv 0\pmod{2}.
	\]
	Therefore, $G$ is triorthogonal.
\end{proof}
\begin{lm} \label{gene matrix of direct sum}
	If $A$ and $B$ are two triorthogonal matrices, then
	\[G=\left[ \begin{array}{c|c}
		A &O\\
		\hline
		O &B\\
	\end{array}\right] \]
	is also a triorthogonal matrix. Here, $O$ is the all-zero matrix.
\end{lm}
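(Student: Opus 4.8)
The plan is to verify the two defining conditions of Definition~\ref{tri matrix} directly, by splitting the rows of $G$ according to which block they come from. Write $A$ as an $m_A$-by-$n_A$ matrix with rows $\af_1,\dots,\af_{m_A}$ and $B$ as an $m_B$-by-$n_B$ matrix with rows $\be_1,\dots,\be_{m_B}$. Then the rows of $G$ are the $m_A$ vectors $\u_i=(\af_i\mid \zero_{n_B})$ together with the $m_B$ vectors $\v_j=(\zero_{n_A}\mid \be_j)$, all of length $n_A+n_B$. The key structural observation is that every $\u_i$ is supported on the first $n_A$ coordinates while every $\v_j$ is supported on the last $n_B$ coordinates, so $\u_i\w\v_j=\zero$ for all $i,j$; consequently any wedge product (of two or three vectors) that mixes a $\u$ with a $\v$ already vanishes.

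First I would treat the pairwise condition. A pair of rows of $G$ is of one of three types: two $\u$'s, two $\v$'s, or one of each. For $(\u_i,\u_{i'})$ we get $|\u_i\w\u_{i'}|=|\af_i\w\af_{i'}|\equiv 0\pmod 2$ by triorthogonality of $A$; symmetrically $|\v_j\w\v_{j'}|=|\be_j\w\be_{j'}|\equiv 0\pmod 2$ by triorthogonality of $B$; and for a mixed pair $|\u_i\w\v_j|=0$ by the disjoint-support observation. Next I would treat the triple condition in the same fashion, with four types of triples according to how the three rows are distributed between the blocks: an all-$\u$ triple gives $|\af_a\w\af_b\w\af_c|\equiv 0$ by triorthogonality of $A$, an all-$\v$ triple is handled by $B$, and any triple containing at least one $\u$ and at least one $\v$ has wedge product $\zero$ since the wedge of its $\u$-factor and its $\v$-factor already vanishes. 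Hence $G$ satisfies both (1) and (2) and is triorthogonal.

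There is essentially no obstacle here: the only point requiring care is the bookkeeping of index ranges so that every unordered pair and triple of rows of $G$ is accounted for exactly once, and the (harmless) remark that a zero matrix is vacuously triorthogonal, so the statement is meaningful even in degenerate cases. One could alternatively derive the result from Proposition~\ref{direct sum 1} by padding $A$ and $B$ with zero columns to obtain triorthogonal matrices $[A\mid O]$ and $[O\mid B]$, and then observe that stacking two triorthogonal matrices whose column supports are disjoint is again triorthogonal; but the direct case analysis above is shorter and self-contained.
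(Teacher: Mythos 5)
Your direct case analysis is correct: partitioning pairs and triples of rows of $G$ by which block they come from, using triorthogonality of $A$ (resp.\ $B$) for the unmixed cases, and the disjoint-support observation $\u_i\w\v_j=\zero$ for any mixed case, cleanly verifies both conditions of Definition~\ref{tri matrix}. However, the paper's own proof is a one-liner that takes a different (and shorter) route: it pads $A$ with zero \emph{rows} below and $B$ with zero \emph{rows} above, producing two triorthogonal matrices $\left[\begin{smallmatrix}A\\O\end{smallmatrix}\right]$ and $\left[\begin{smallmatrix}O\\B\end{smallmatrix}\right]$ of the same height $m_A+m_B$, and then invokes Proposition~\ref{direct sum 1} (side-by-side concatenation of equal-height triorthogonal matrices) to conclude that $G$ is triorthogonal. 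The alternative you sketch at the end of your proposal is not quite this: you pad with zero \emph{columns} to get $[A\mid O]$ and $[O\mid B]$ and then appeal to a new ``stacking matrices with disjoint column supports'' principle, which is not what Proposition~\ref{direct sum 1} asserts, so that variant would require a separate lemma. In short, your main argument is a perfectly valid elementary verification; the paper's argument is more economical because it reduces the statement to a lemma already proved, at the cost of the mildly non-obvious trick of padding with zero rows rather than zero columns.
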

\begin{proof}
	It is clear that
	\[
	\left[ \begin{array}{c}
		A\\
		O\\
	\end{array}\right]
	\text{ and }
	\left[ \begin{array}{c}
		O\\
		B\\
	\end{array}\right]
	\] are all triorthogonal matrices, and we obtain the result by Proposition \ref{direct sum 1}.
\end{proof}

\begin{lm}[\cite{book}] \label{direct sum}
	Let $\C_1$ and $\C_2$ be two binary linear codes with generator matrices $A_{a_1\times a_2}$ and $B_{b_1\times b_2}$, respectively. Let
	\[
	\C = \C_1\oplus\C_2 = \{(\u|\v):\u\in\C_1,\v\in\C_2\},
	\]
	where $(\u|\v) = (u_1,\dots,u_{a_2},v_1,\dots,v_{b_2})$. We have\\
	(1) $\C$ is a binary $[a_2+b_2,a_1+b_1,d]$ linear code, where $d = \min\{d(\C_1),d(\C_2)\}$;\\
	(2) a generator matrix of $\C$ can be
	\[\left[ \begin{array}{c|c}
		A &O \\
		\hline
		O &B
	\end{array}\right] .
	\]
\end{lm}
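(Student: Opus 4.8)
The plan is to prove (2) first, after which the dimension claim in (1) is immediate. Denote by $\af_1,\dots,\af_{a_1}$ the rows of $A$ and by $\be_1,\dots,\be_{b_1}$ the rows of $B$; then the block-diagonal matrix displayed in (2) has rows $(\af_i|\zero)\in\F_2^{a_2+b_2}$ for $1\le i\le a_1$ and $(\zero|\be_j)\in\F_2^{a_2+b_2}$ for $1\le j\le b_1$. First I would show that the $\F_2$-span of these rows is exactly $\C$: a general linear combination is $(\sum_i\lambda_i\af_i\,|\,\sum_j\mu_j\be_j)=(\u|\v)$ with $\u\in\C_1$, $\v\in\C_2$, and conversely every $(\u|\v)$ with $\u\in\C_1$, $\v\in\C_2$ arises this way because $A$ and $B$ generate $\C_1$ and $\C_2$. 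Next I would verify that these $a_1+b_1$ rows are linearly independent: since the two coordinate blocks are disjoint, a vanishing combination splits into $\sum_i\lambda_i\af_i=\zero$ and $\sum_j\mu_j\be_j=\zero$, and as the rows of $A$ (resp. $B$) form a basis of $\C_1$ (resp. $\C_2$) all $\lambda_i$ and $\mu_j$ must vanish. Hence the block-diagonal matrix has rank $a_1+b_1$ and row span $\C$, which is exactly statement (2), and in particular $\dim\C=a_1+b_1$.

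For (1), the length $a_2+b_2$ is immediate from the definition of the concatenation $(\u|\v)$, and $\dim\C=a_1+b_1$ was just established. For the minimum distance the key remark is that ${\mbox{wt}}(\u|\v)={\mbox{wt}}(\u)+{\mbox{wt}}(\v)$ for every $(\u|\v)\in\C$, because the $\C_1$-coordinates and the $\C_2$-coordinates are disjoint. I would then argue by cases on a nonzero codeword $(\u|\v)$: if $\u\ne\zero$ and $\v\ne\zero$ its weight is at least $d(\C_1)+d(\C_2)\ge\min\{d(\C_1),d(\C_2)\}$; if exactly one of $\u,\v$ is zero, its weight equals the weight of the nonzero component, hence is at least $d(\C_1)$ or $d(\C_2)$ accordingly. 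This gives $d(\C)\ge\min\{d(\C_1),d(\C_2)\}$. For the reverse inequality, pick a minimum-weight nonzero $\c\in\C_1$; then $(\c|\zero)\in\C$ has weight $d(\C_1)$, and symmetrically $\C$ contains a codeword of weight $d(\C_2)$, so $d(\C)\le\min\{d(\C_1),d(\C_2)\}$. Combining the two inequalities yields $d(\C)=\min\{d(\C_1),d(\C_2)\}$.

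I do not expect any genuinely hard step here: this is the classical direct-sum construction for linear codes (hence the citation to \cite{book}), and each step above is a short, routine check. The only things worth stating carefully are the standing assumptions implicit in the statement — that the rows of $A$ and of $B$ are linearly independent, so the independence argument in the first paragraph applies, and that $\C_1$ and $\C_2$ are nonzero so that $d(\C_1),d(\C_2)$ are defined — and with these the proof is complete.
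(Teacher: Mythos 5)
The paper does not give its own proof of this lemma; it is simply cited from MacWilliams and Sloane \cite{book} as a classical fact. Your proof is the standard textbook argument and is correct in all details (row span and independence of the block-diagonal matrix, length and dimension, and the two-inequality argument for the minimum distance via the weight-additivity of $(\u|\v)$), so there is nothing to reconcile with the paper beyond noting it omits the proof.
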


\begin{lm}\label{oplus and perp}
	Following the notations of Lemma \ref{direct sum}, we have	$\C_1^{\perp}\oplus\C_2^{\perp} = (\C_1\oplus\C_2)^{\perp}$.
\end{lm}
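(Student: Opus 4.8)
\emph{Proof idea.} The plan is to establish the two inclusions separately, the only ingredient being that the Euclidean inner product on $\F_2^{a_2+b_2}$ splits along the direct‑sum decomposition. Concretely, for $\x\in\F_2^{a_2}$, $\y\in\F_2^{b_2}$, $\u\in\F_2^{a_2}$, $\v\in\F_2^{b_2}$ the coordinatewise product $(\x|\y)\w(\u|\v)$ equals $(\x\w\u\,|\,\y\w\v)$, and since the Hamming weight of a concatenation is the sum of the weights we get $|(\x|\y)\w(\u|\v)| = |\x\w\u| + |\y\w\v|$. I would record this identity first, as it is used in both directions.

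For $\C_1^{\perp}\oplus\C_2^{\perp}\subseteq(\C_1\oplus\C_2)^{\perp}$, take $\x\in\C_1^{\perp}$, $\y\in\C_2^{\perp}$, and an arbitrary codeword $(\u|\v)$ of $\C_1\oplus\C_2$, so $\u\in\C_1$ and $\v\in\C_2$. By the splitting identity $|(\x|\y)\w(\u|\v)| = |\x\w\u| + |\y\w\v| \equiv 0 + 0 \equiv 0 \pmod 2$, hence $(\x|\y)\in(\C_1\oplus\C_2)^{\perp}$.

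For the reverse inclusion, take $(\x|\y)\in(\C_1\oplus\C_2)^{\perp}$ with $\x\in\F_2^{a_2}$ and $\y\in\F_2^{b_2}$. Since $\C_2$ is linear we have $\zero_{b_2}\in\C_2$, so $(\u|\zero_{b_2})\in\C_1\oplus\C_2$ for every $\u\in\C_1$, and therefore $0 \equiv |(\x|\y)\w(\u|\zero_{b_2})| = |\x\w\u| \pmod 2$; as $\u$ was arbitrary, $\x\in\C_1^{\perp}$. Symmetrically, using $\zero_{a_2}\in\C_1$ and $(\zero_{a_2}|\v)\in\C_1\oplus\C_2$ for every $\v\in\C_2$, we get $\y\in\C_2^{\perp}$, so $(\x|\y)\in\C_1^{\perp}\oplus\C_2^{\perp}$. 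Combining the two inclusions gives the claimed equality.

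There is no genuinely hard step here; the only thing to watch is the bookkeeping of which block a given vector occupies and the use of the zero codeword of each factor. As an alternative that I would mention but not carry out, the reverse inclusion can be bypassed altogether: the forward inclusion together with a dimension count suffices, since by Lemma~\ref{direct sum} $\dim(\C_1^{\perp}\oplus\C_2^{\perp}) = (a_2-a_1)+(b_2-b_1) = (a_2+b_2)-(a_1+b_1) = \dim\big((\C_1\oplus\C_2)^{\perp}\big)$, using that over $\F_2$ a code and its dual have complementary dimensions.
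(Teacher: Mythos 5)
Your proof is correct and rests on the same key observation the paper uses, namely that the coordinatewise product and Hamming weight split along the concatenation, so $|(\x|\y)\w(\u|\v)| = |\x\w\u| + |\y\w\v|$. The only difference is how the second half is closed out: you prove the reverse inclusion directly by testing against codewords of the form $(\u|\zero_{b_2})$ and $(\zero_{a_2}|\v)$, whereas the paper proves only the forward inclusion and then concludes by the dimension count $\dim(\C_1^{\perp}\oplus\C_2^{\perp})=(a_2-a_1)+(b_2-b_1)=(a_2+b_2)-(a_1+b_1)=\dim\bigl((\C_1\oplus\C_2)^{\perp}\bigr)$ — precisely the alternative you mention at the end, so you have in effect given both arguments.
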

\begin{proof}
	For any $(\u|\v) \in \C_1^{\perp}\oplus\C_2^{\perp}$ and $(\x|\y)\in \C_1\oplus\C_2$, we have
	\[
	|(\u|\v)\w (\x|\y)| = |\u\w\x| + |\v\w\y| \equiv 0 \pmod{2}.
	\]
	Therefore, $(\u|\v)\in(\C_1\oplus\C_2)^{\perp}$ and $\C_1^{\perp}\oplus\C_2^{\perp} \subseteq (\C_1\oplus\C_2)^{\perp}$. Noting that the dimension of $\C_1^{\perp}\oplus\C_2^{\perp}$ is $(a_2-a_1)+(b_2-b_1)$, and the dimension of $(\C_1\oplus\C_2)^{\perp}$ is $(a_2+b_2) - (a_1+b_1)$. Hence, $|\C_1|=|\C_2|$ and $\C_1^{\perp}\oplus\C_2^{\perp} = (\C_1\oplus\C_2)^{\perp}$.
\end{proof}

\begin{lm}\label{diff oplus}
	\red{If $A,B,C,D$ are four binary linear codes}, where $C\subsetneqq A$ and $D\subsetneqq B$, then
	\[
	(A\oplus B)\setminus(C\oplus D) = [(A\setminus C)\oplus(B\setminus D)] \cup [C\oplus(B\setminus D)] \cup [(A\setminus C)\oplus D].
	\]
	Furthermore, ${\rm{wt}}((A\oplus B)\setminus(C\oplus D)) = \min \{{\rm{wt}}(A\setminus C),{\rm{wt}}(B\setminus D)\}$.
\end{lm}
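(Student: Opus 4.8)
The plan is to establish the set identity first and then read off the weight statement from it. Let me write $E = A \oplus B$, $F = C \oplus D$, and fix an arbitrary element $(\u|\v) \in E$, so $\u \in A$ and $\v \in B$. The key observation is that $(\u|\v) \in F$ if and only if \emph{both} $\u \in C$ and $\v \in D$; hence $(\u|\v) \in E \setminus F$ if and only if $\u \notin C$ or $\v \notin D$. From here I would split into the three mutually exclusive cases: (i) $\u \notin C$ and $\v \notin D$, which lands in $(A\setminus C)\oplus(B\setminus D)$; (ii) $\u \in C$ and $\v \notin D$, which lands in $C \oplus (B\setminus D)$; and (iii) $\u \notin C$ and $\v \in D$, which lands in $(A\setminus C)\oplus D$. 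This shows the left-hand side is contained in the union. The reverse inclusion is immediate since each of the three listed sets consists of pairs $(\u|\v)$ with $\u \in A$, $\v \in B$ (so $(\u|\v)\in E$) and with at least one of $\u \notin C$, $\v\notin D$ (so $(\u|\v)\notin F$). Note that the hypotheses $C \subsetneqq A$ and $D \subsetneqq B$ guarantee $A\setminus C$ and $B\setminus D$ are nonempty, so all three pieces are genuinely nonempty; this matters for the weight statement.

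For the weight formula, recall ${\rm wt}(H) = \min\{{\mbox{wt}}(\x) : \x \in H,\ \x \neq \zero\}$ from the Preliminaries, and for a concatenation ${\mbox{wt}}((\u|\v)) = {\mbox{wt}}(\u) + {\mbox{wt}}(\v)$. I would argue the two inequalities. For ``$\le$'': pick $\u_0 \in A\setminus C$ with ${\mbox{wt}}(\u_0) = {\rm wt}(A\setminus C)$; since $D \subsetneqq B$ we may pick $\v_0 \in D$ (indeed $\zero \in D$ works, and $\zero \in D$ since $D$ is linear), giving $(\u_0|\v_0) \in (A\setminus C)\oplus D \subseteq E\setminus F$ of weight ${\rm wt}(A\setminus C)$; symmetrically we get an element of weight ${\rm wt}(B\setminus D)$, so ${\rm wt}(E\setminus F) \le \min\{{\rm wt}(A\setminus C), {\rm wt}(B\setminus D)\}$. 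For ``$\ge$'': take any nonzero $(\u|\v) \in E\setminus F$; by the case analysis above, either $\u \notin C$, in which case ${\mbox{wt}}((\u|\v)) = {\mbox{wt}}(\u) + {\mbox{wt}}(\v) \ge {\mbox{wt}}(\u) \ge {\rm wt}(A\setminus C)$, or $\v \notin D$, giving ${\mbox{wt}}((\u|\v)) \ge {\rm wt}(B\setminus D)$; either way ${\mbox{wt}}((\u|\v)) \ge \min\{{\rm wt}(A\setminus C),{\rm wt}(B\setminus D)\}$, and taking the minimum over all such $(\u|\v)$ finishes it.

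I do not anticipate a serious obstacle here; the statement is essentially bookkeeping. The one point that deserves a moment's care is making sure the three sets on the right-hand side are nonempty and, in particular, that in the ``$\le$'' direction one really can complete a minimum-weight element of $A\setminus C$ to an element of $E\setminus F$ without increasing the weight — this is exactly where $\zero \in D$ (equivalently, $D$ being a \emph{linear} code, so $D \ni \zero$) is used, together with $C \subsetneqq A$ to know $A\setminus C \neq \emptyset$. A second minor subtlety is that the element $(\u|\v)$ chosen in the ``$\ge$'' direction is assumed nonzero, which is automatic because ${\rm wt}$ only ranges over nonzero vectors and $\zero \notin E\setminus F$ anyway (as $\zero \in F$). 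With these checked, the identity and the weight equality both follow directly from the case split on whether $\u \in C$ and whether $\v \in D$.
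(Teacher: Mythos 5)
Your proof is correct, and it takes a genuinely different (and arguably cleaner) route than the paper on the set-identity half. The paper establishes $(A\oplus B)\setminus(C\oplus D) = [(A\setminus C)\oplus(B\setminus D)] \cup [C\oplus(B\setminus D)] \cup [(A\setminus C)\oplus D]$ indirectly: it first verifies the three pieces are contained in the left-hand side and pairwise disjoint, and then shows equality by a cardinality count, expanding each side as products of $|A|,|B|,|C|,|D|$ and matching. You instead observe directly that $(\u|\v)\in C\oplus D$ if and only if $\u\in C$ and $\v\in D$, so membership in the difference is exactly ``$\u\notin C$ or $\v\notin D$'', and the three cases are just the disjoint decomposition of that disjunction. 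This is shorter, avoids any counting, and would even work for infinite ambient sets, which the paper's argument would not. For the weight formula, the two proofs are closer in spirit: the paper first shows $\mathrm{wt}((A\setminus C)\oplus(B\setminus D))=\mathrm{wt}(A\setminus C)+\mathrm{wt}(B\setminus D)$ by contradiction and then takes the minimum over the three pieces, whereas you prove the two inequalities directly --- the upper bound by padding a minimum-weight element of $A\setminus C$ (resp.\ $B\setminus D$) with $\zero\in D$ (resp.\ $\zero\in C$), the lower bound from the case split. You are also right to flag where linearity of $C$ and $D$ is used: it guarantees $\zero\in D$ for the padding step and, implicitly, that any $\u\in A\setminus C$ is nonzero so $\mathrm{wt}(\u)\ge\mathrm{wt}(A\setminus C)$ is legitimate. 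Both proofs are sound; yours buys a more transparent set argument, the paper's buys an explicit count that some readers may find more concrete.
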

\begin{proof}
	It is clear that $(A\setminus C)\oplus(B\setminus D)$, $C\oplus(B\setminus D)$, $(A\setminus C)\oplus D$ are  subsets of $(A\oplus B)\setminus(C\oplus D)$. Therefore, $[(A\setminus C)\oplus(B\setminus D)] \cup [C\oplus(B\setminus D)] \cup [(A\setminus C)\oplus D]\subseteq (A\oplus B)\setminus(C\oplus D)$. Also,
	\[\begin{array}{rlllc}
		(A\setminus C)\oplus(B\setminus D) &\cap  &C\oplus(B\setminus D) &= &\varnothing, \\
		C\oplus(B\setminus D) &\cap &(A\setminus C)\oplus D &= &\varnothing, \\
		(A\setminus C)\oplus(B\setminus D)&\cap &(A\setminus C)\oplus D&= &\varnothing.
	\end{array}\]
	Since
	\[\begin{aligned}
		|(A\oplus B)\setminus(C\oplus D)| &= |A||B|-|C||D|,     \\
		|(A\setminus C)\oplus(B\setminus D)| &= (|A|-|C|)(|B|-|D|)   =   |A||B| + |C||D| - |A||D| - |B||C|,  \\
		|C\oplus(B\setminus D)| &=  |C|(|B|-|D|)  = |B||C| - |C||D|, \\
		|(A\setminus C)\oplus D| &= (|A|-|C|)|D| =  |A||D| - |C||D|.
	\end{aligned}\]
	Therefore, $|(A\setminus C)\oplus(B\setminus D)|+|C\oplus(B\setminus D)|+|(A\setminus C)\oplus D| = |(A\oplus B)\setminus(C\oplus D)|$. The result follows.
	
	For the remaining conclusion, we divide it into two parts to prove. \red{If ${\mbox{wt}}((A\setminus C)\oplus(B\setminus D)) \neq {\mbox{wt}}(A\setminus C)+{\mbox{wt}}(B\setminus D)$, then there exists $(\x|\y)\in (A\setminus C)\oplus(B\setminus D)$ satisfying \[{\mbox{wt}}(\x)+{\mbox{wt}}(\y)={\mbox{wt}}((\x|\y))< {\mbox{wt}}(A\setminus C)+{\mbox{wt}}(B\setminus D),\]
	where $\x\in A\setminus C$ and $\y\in B\setminus D$. Therefore, we have ${\mbox{wt}}(\x)<{\mbox{wt}}(A\setminus C)$ or ${\mbox{wt}}(\y)<{\mbox{wt}}(B\setminus D)$, or both, which is a contradiction.} Hence,
	\[\begin{aligned}
		{\mbox{wt}}((A\oplus B)\setminus(C\oplus D)) &= \min\{ {\mbox{wt}}((A\setminus C)\oplus(B\setminus D)) , {\mbox{wt}}(C\oplus(B\setminus D)), {\mbox{wt}}((A\setminus C)\oplus D)\}\\
		&= \min\{ {\mbox{wt}}(A\setminus C)+{\mbox{wt}}(B\setminus D), {\mbox{wt}}(A\setminus C), {\mbox{wt}}(B\setminus D) \}\\
		&= \min\{{\mbox{wt}}(A\setminus C), {\mbox{wt}}(B\setminus D)\}.
	\end{aligned}\]
	The result follows
\end{proof}

\begin{thm}\label{dirct sum of tricodes}
	Let $A_{\a_1\times\a_2}$ and $B_{\b_1\times\b_2}$ be two triorthogonal matrices of distance $d_A$ and $d_B$, respectively. The first $k_a$ $(k_a>0)$ rows of $A$ have odd weight and the remaining rows have even weight. The first $k_b$ $(k_b>0)$ rows of $B$ have odd weight and the remaining rows have even weight. Let
	\[
	A = \left[\begin{array}{c}
		A_1 \\
		\hline
		A_0
	\end{array} \right]\text{ and }
	B = \left[\begin{array}{c}
		B_1 \\
		\hline
		B_0
	\end{array}\right],
	\] where $A_1$ has size $k_a$-by-$\a_2$ and $B_1$ has size $k_b$-by-$\b_2$. If
	\[
	G = \left[\begin{array}{c|c}
		A_1 &O \\
		\hline
		O   &B_1 \\
		\hline
		A_0 &O \\
		\hline
		O   &B_0 \\
	\end{array} \right] \xleftarrow{\text{Row Permutation}}
	\left[\begin{array}{c|c}
		A &O\\
		\hline
		O &B\\
	\end{array} \right],
	\]
	then the parameters of $\TriCode(G)$ are $[[\a_2+\b_2,k_a+k_b,\min\{d_a,d_b\}]]$, where the parameters of $\TriCode(A)$ and $\TriCode(B)$ are $[[\a_2,k_a,d_a]]$ and $[[\b_2,k_b,d_b]]$, respectively.
\end{thm}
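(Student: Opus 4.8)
\section*{Proof proposal}

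The plan is to reduce everything to the block-diagonal matrix and then let Lemmas \ref{direct sum}, \ref{oplus and perp} and \ref{diff oplus} do the work. First I would note that $G$ is, by construction, a row permutation of the block-diagonal matrix $\mathrm{diag}(A,B)$ appearing in the statement; by Lemma \ref{gene matrix of direct sum} this matrix is triorthogonal, it has $n=\a_2+\b_2$ columns, and, since appending zero coordinates to a row does not change its weight, its odd-weight rows are precisely the rows inherited from $A_1$ and from $B_1$, so it has $k_a+k_b$ odd-weight rows. A row permutation affects neither $\rowspan\{G\}$ nor the span of the even-weight rows of $G$, hence $\TriCode(G)=\TriCode(\mathrm{diag}(A,B))$ and it suffices to analyse the block-diagonal matrix.

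Write $\mathcal A=\rowspan\{A\}$, $\mathcal A_0=\rowspan\{A_0\}$ and $\mathcal B,\mathcal B_0$ analogously. Reading Lemma \ref{direct sum}(2) in reverse gives $\G=\rowspan\{\mathrm{diag}(A,B)\}=\mathcal A\oplus\mathcal B$ and, applied to the even-weight rows, $\G_0=\mathcal A_0\oplus\mathcal B_0$; Lemma \ref{oplus and perp} then yields $\G^{\perp}=\mathcal A^{\perp}\oplus\mathcal B^{\perp}$ and $\G_0^{\perp}=\mathcal A_0^{\perp}\oplus\mathcal B_0^{\perp}$. Since $\TriCode(A)$ has $k_a$ logical qubits we have $\dim\mathcal A-\dim\mathcal A_0=k_a$, and likewise $\dim\mathcal B-\dim\mathcal B_0=k_b$, so $\TriCode(G)$ encodes $\dim\G-\dim\G_0=k_a+k_b$ logical qubits into $n=\a_2+\b_2$ qubits, which settles the first two parameters. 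Moreover, by the definition of the distance of a triorthogonal matrix, $d_a={\mbox{wt}}(\mathcal A_0^{\perp}\setminus\mathcal A^{\perp})$ and $d_b={\mbox{wt}}(\mathcal B_0^{\perp}\setminus\mathcal B^{\perp})$.

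It then remains to compute the distance of $G$, namely
\[
d_Z={\mbox{wt}}(\G_0^{\perp}\setminus\G^{\perp})={\mbox{wt}}\big((\mathcal A_0^{\perp}\oplus\mathcal B_0^{\perp})\setminus(\mathcal A^{\perp}\oplus\mathcal B^{\perp})\big),
\]
which is exactly the situation of Lemma \ref{diff oplus} with $(A,B,C,D)$ there taken to be $(\mathcal A_0^{\perp},\mathcal B_0^{\perp},\mathcal A^{\perp},\mathcal B^{\perp})$. To invoke it I must check the strict inclusions $\mathcal A^{\perp}\subsetneq\mathcal A_0^{\perp}$ and $\mathcal B^{\perp}\subsetneq\mathcal B_0^{\perp}$: since $k_a>0$, the matrix $A$ has an odd-weight row, which lies in $\mathcal A$ but not in the all-even-weight space $\mathcal A_0$, so $\mathcal A_0\subsetneq\mathcal A$ and hence $\mathcal A^{\perp}\subseteq\mathcal A_0^{\perp}$; the inclusion is strict because $\dim\mathcal A_0^{\perp}-\dim\mathcal A^{\perp}=\dim\mathcal A-\dim\mathcal A_0=k_a>0$, and the argument for $\mathcal B$ is identical. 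Lemma \ref{diff oplus} then gives
\[
d_Z=\min\{{\mbox{wt}}(\mathcal A_0^{\perp}\setminus\mathcal A^{\perp}),\,{\mbox{wt}}(\mathcal B_0^{\perp}\setminus\mathcal B^{\perp})\}=\min\{d_a,d_b\},
\]
which finishes the proof.

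This is essentially a bookkeeping exercise once Lemmas \ref{direct sum}, \ref{oplus and perp} and \ref{diff oplus} are in hand, so I do not expect a serious obstacle; the only points that require care are matching the block structure of $G$ to those lemmas, in particular checking that the span of the even-weight rows of $G$ is exactly $\mathcal A_0\oplus\mathcal B_0$ rather than some larger subspace, and verifying the strict-inclusion hypotheses of Lemma \ref{diff oplus}, which is precisely where the assumptions $k_a>0$ and $k_b>0$ are used.
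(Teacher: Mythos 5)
Your proof is correct and follows essentially the same route as the paper: identify $\G=\G_a\oplus\G_b$ and $\G_0=\G_{a0}\oplus\G_{b0}$, then apply Lemma \ref{oplus and perp} and Lemma \ref{diff oplus} to compute $d_Z$. You are a bit more careful than the paper in explicitly reducing to the block-diagonal matrix and in verifying the strict-inclusion hypotheses of Lemma \ref{diff oplus} via $k_a,k_b>0$, but the underlying argument is the same.
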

\begin{proof}
	Let $\TriCode(G) = \CSS(\G_0;\G)$, $\TriCode(A) = \CSS(\G_{a0};\G_a)$ and $\TriCode(B) = \CSS(\G_{b0};\G_b)$. Since $G$ has $k_a+k_b$ odd-weight rows, thus $\TriCode(G)$ has parameters $[[\a_2+\b_2,k_a+k_b]]$. Since
	$
	\G = \G_a \oplus \G_b,
	$
	and $\G_0 = \G_{a0}\oplus\G_{b0}$, then by Lemma \ref{oplus and perp} and Lemma \ref{diff oplus}, we have
	\[\begin{aligned}
		{\mbox{wt}}(\G_0^{\perp}\setminus \G^{\perp}) &= {\mbox{wt}}((\G_{a0}\oplus\G_{b0})^{\perp}\setminus (\G_a \oplus \G_b)^{\perp}) \\
		&= {\mbox{wt}}((\G_{a0}^{\perp}\oplus\G_{b0}^{\perp})\setminus (\G_a^{\perp} \oplus \G_b^{\perp}) \\
		&= \min\{
		{\mbox{wt}}(\G_{a0}^\perp\setminus\G_{a}^\perp),{\mbox{wt}}(\G_{b0}^\perp\setminus\G_{b}^\perp)
		\}\\
		&= \min\{  d_a , d_b  \}.
	\end{aligned}\]
	The result follows.
\end{proof}
\begin{re}
	Theorem \ref{dirct sum of tricodes} shows that if there exist two triorthogonal codes of parameters $[[n_i,k_i,d_i]]$ $(i=1,2)$, then there exists a triorthogonal code of parameters $[[n_1+n_2,k_1+k_2,\min\{d_1,d_2\}]]$. Here $d_i$ $(i=1,2)$ are the distances of the corresponding triorthogonal matrices.
\end{re}

Next, for two binary linear codes $\C_1$ and $\C_2$ of the same length, we define their Plotkin sum as follows.
\[
\C_1 \odot \C_2 = \{ (\u|\u+\v):\u\in\C_1,\v\in\C_2 \}.
\]
\begin{lm}[\cite{book}] \label{u u+v}
	Let $\C_1$ and $\C_2$ be two binary linear codes of parameters $[n,k_1,d_1]$ and $[n,k_2,d_2]$, respectively. Let
	\[\C = \C_1 \odot \C_2 = \{ (\u|\u+\v):\u\in\C_1,\v\in\C_2 \}.\]
	Then\\
	(1) $\C$ is a binary $[2n,k_1+k_2,d]$ linear code, where $d = \min\{2d_1,d_2\}$; \\
	(2) a generator matrix of $\C$ can be
	\[
	G = \left[ \begin{array}{c|c}
		G_1  &G_1 \\
		\hline
		O    &G_2 \\
	\end{array} \right],
	\]
	where $G_i$ is the generator matrix of $\C_i$, $i=1,2$.
\end{lm}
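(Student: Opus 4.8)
\emph{Proof sketch.} The plan is to read off the three invariants of $\C$ — its length, its dimension, and its minimum distance — directly from the definition of the Plotkin sum, and to obtain part (2) as a byproduct once the dimension is pinned down, since the matrix in (2) is then visibly a generator matrix.

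First I would introduce the $\F_2$-linear map $\phi\colon\C_1\times\C_2\to\F_2^{2n}$ defined by $\phi(\u,\v)=(\u|\u+\v)$, whose image is exactly $\C$. It is injective: $\phi(\u,\v)=\zero$ forces $\u=\zero$ from the first $n$ coordinates, and then $\u+\v=\zero$ forces $\v=\zero$. Hence $\C$ is a linear code of length $2n$ and dimension $\dim(\C_1\times\C_2)=k_1+k_2$, which gives the length and the dimension claimed in (1). For (2), the rows of the matrix $G$ displayed in the statement are precisely the vectors $\phi(\g,\zero)$ as $\g$ ranges over the rows of $G_1$, together with the vectors $\phi(\zero,\g')$ as $\g'$ ranges over the rows of $G_2$; by linearity of $\phi$ their span equals $\phi(\C_1\times\C_2)=\C$, and since there are $k_1+k_2$ of them while $\dim\C=k_1+k_2$, they are independent, so $G$ generates $\C$.

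It remains to compute $d(\C)$. For the lower bound, take any nonzero codeword $(\u|\u+\v)$ with $\u\in\C_1$, $\v\in\C_2$, and split on whether $\v=\zero$. If $\v=\zero$, then $\u\neq\zero$ and the codeword is $(\u|\u)$, of weight $2\,{\mbox{wt}}(\u)\geq 2d_1$. If $\v\neq\zero$, apply the Hamming triangle inequality ${\mbox{wt}}(\x)+{\mbox{wt}}(\y)\geq{\mbox{wt}}(\x+\y)$ with $\x=\u$ and $\y=\u+\v$, so that $\x+\y=\v$ and ${\mbox{wt}}\big((\u|\u+\v)\big)={\mbox{wt}}(\u)+{\mbox{wt}}(\u+\v)\geq{\mbox{wt}}(\v)\geq d_2$. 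Hence every nonzero codeword has weight at least $\min\{2d_1,d_2\}$. For the reverse inequality, a minimum-weight word $\u_0\in\C_1$ yields the codeword $(\u_0|\u_0)$ of weight $2d_1$, and a minimum-weight word $\v_0\in\C_2$ yields $\phi(\zero,\v_0)=(\zero|\v_0)$ of weight $d_2$; thus $d(\C)\leq\min\{2d_1,d_2\}$, and combining the two bounds gives $d(\C)=\min\{2d_1,d_2\}$.

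This is a classical fact (quoted here from \cite{book}), so I do not anticipate a genuine obstacle. The one place that deserves care is the distance lower bound: the case $\v=\zero$ must be kept separate — it is exactly there that the factor $2$ in $2d_1$ appears — and the triangle inequality must be fed the pair $(\u,\u+\v)$ rather than $(\u,\v)$, so that the two arguments sum to $\v$. Everything else is routine bookkeeping with the product structure of $\C_1\times\C_2$.
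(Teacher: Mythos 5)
Your proof is correct. The paper itself does not prove this lemma — it is quoted directly from MacWilliams and Sloane \cite{book} as a classical fact — so there is no proof in the paper to compare against. Your argument is the standard one: the linear map $\phi(\u,\v)=(\u\mid\u+\v)$ is injective with image $\C$, which gives the length and dimension, and the displayed $G$ is a generator matrix because its $k_1+k_2$ rows are $\phi$ applied to the standard basis of $\C_1\times\C_2$. The distance computation correctly isolates the $\v=\zero$ case (where the factor $2$ arises from the duplication $(\u\mid\u)$) and applies the triangle inequality $\mbox{wt}(\u)+\mbox{wt}(\u+\v)\geq\mbox{wt}(\v)$ for $\v\neq\zero$, with matching upper bounds from the codewords $(\u_0\mid\u_0)$ and $(\zero\mid\v_0)$. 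One small remark: the $\v\neq\zero$ branch already subsumes the subcase $\u=\zero$, so the case split is exhaustive, and you have implicitly (and correctly) used that $k_1,k_2>0$ so that $d_1,d_2$ are defined. No gaps.
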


Unfortunately, if $G_1$ and $G_2$ are two triorthogonal matrices, then $G$ may not be a triorthogonal matrix. But we can consider the following two constructions. Let $G_{m\times n}$ be a triorthogonal matrix whose first $k$ rows have odd weight. If
\[
G = \left[\begin{array}{c}
	G_1\\
	\hline
	G_0\\
\end{array}\right]
\begin{array}{l}
	\}\text{first }k \text{ rows } \\
	\}\text{remaining } m-k \text{ rows }
\end{array},
\]
then the matrices
\[	G'=\left[ \begin{array}{c|c}
	G_1  &G_1 \\
	\hline
	O    &G_0 \\
\end{array}\right]
\text{ and }
G''=\left[ \begin{array}{c|c}
	O  &G_1 \\
	\hline
	G_0    &G_0 \\
\end{array}\right]
\]
are all triorthogonal by Proposition \ref{direct sum 1}. $G'$ can generate a triorthogonal space since each row of $G'$ has even weight, and $G''$ can generate a triorthogonal code of parameters $[[2n,k]]$. Furthermore, we have the following theorem.

\begin{thm}
	\red{Over $\F_2$, let $G_s$} be a binary triorthogonal matrix of size $m$-by-$n$ whose first $k$ rows have odd weight. If
	\[
	G_s = \left[\begin{array}{c}
		G_1\\
		\hline
		G_0\\
	\end{array}\right]
	\begin{array}{l}
		\}\text{first }k \text{ rows } \\
		\}\text{remaining } m-k \text{ rows }
	\end{array}
	\text{ and }
	G=\left[ \begin{array}{c|c}
		O  &G_1 \\
		\hline
		G_0    &G_0 \\
	\end{array}\right],
	\]
	then the triorthogonal code $\TriCode(G)$ has parameters $[[2n,k,d_Z]]$, where $d_Z \geq d((\G_0^2)^\perp)$. Here $\G_1$ is the span of all the rows of $G_1$ and $\G_0^2$ is the span of all the rows of $[G_0\mid G_0]$.
\end{thm}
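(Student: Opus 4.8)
The plan is to read off the block structure of $G$, note that triorthogonality has already been recorded in the text, and then derive the distance bound from a one-line inclusion of codes.

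First I would fix notation and the parameters $[[2n,k]]$. Every row of $[\,O\mid G_1\,]$ has the same Hamming weight as the corresponding row of $G_1$, and by hypothesis these are precisely the odd-weight rows of $G_s$; every row of $[\,G_0\mid G_0\,]$ has weight twice that of a row of $G_0$, hence even. So $G$ is an $m$-by-$2n$ binary matrix with exactly $k$ odd-weight rows, and it is triorthogonal: it is the matrix called $G''$ in the paragraph preceding the statement, shown there to be triorthogonal via Proposition \ref{direct sum 1}. Writing $\C_0=\rowspan\{G_0\}$ for the length-$n$ code spanned by the rows of $G_0$, the span of the even-weight rows of $G$ is $\G_0=\G_0^2=\rowspan\{[\,G_0\mid G_0\,]\}=\{(\u\mid\u):\u\in\C_0\}$, and $\G=\rowspan\{[\,O\mid G_1\,],[\,G_0\mid G_0\,]\}$. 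Thus $\TriCode(G)=\CSS(\G_0^2;\G)$, and since it has $k$ odd-weight rows it encodes $k$ logical qubits into $2n$ qubits (by the result of \cite{basis paper 2} recalled in Section 2); hence its parameters are $[[2n,k,d_Z]]$ with $d_Z={\mbox{wt}}\big((\G_0^2)^\perp\setminus\G^\perp\big)$.

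Next I would prove the distance inequality. The even-weight rows of $G$ are among all its rows, so $\G_0^2\subseteq\G$ and therefore $\G^\perp\subseteq(\G_0^2)^\perp$; in particular $\zero_{2n}\in\G^\perp$. Consequently every vector of $(\G_0^2)^\perp\setminus\G^\perp$ is a nonzero vector of $(\G_0^2)^\perp$, i.e.
\[
(\G_0^2)^\perp\setminus\G^\perp \;\subseteq\; (\G_0^2)^\perp\setminus\{\zero_{2n}\},
\]
and the minimum weight over the smaller set is at least that over the larger one:
\[
d_Z \;=\; {\mbox{wt}}\big((\G_0^2)^\perp\setminus\G^\perp\big) \;\geq\; {\mbox{wt}}\big((\G_0^2)^\perp\setminus\{\zero_{2n}\}\big) \;=\; d\big((\G_0^2)^\perp\big),
\]
which is the claim. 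One should also note that $(\G_0^2)^\perp\setminus\G^\perp$ is nonempty, so that $d_Z$ is well defined: with $k\geq 1$, $[\,O\mid G_1\,]$ has an odd-weight row, which cannot lie in the all-even-weight code $\G_0^2$, so $\G\supsetneq\G_0^2$ and $\G^\perp\subsetneq(\G_0^2)^\perp$.

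There is no genuine obstacle here: once triorthogonality of $G$ is granted, the bound is forced by $\G_0^2\subseteq\G$. The only care needed is bookkeeping — checking that the block form really makes the odd-weight rows of $G$ correspond to those of $G_s$, so that the logical-qubit count is exactly $k$ (this is precisely why the companion matrix formed by stacking $[\,G_1\mid G_1\,]$ over $[\,O\mid G_0\,]$, all of whose rows are even, yields only a triorthogonal \emph{space}), and keeping $\C_0\subseteq\F_2^n$ distinct from $\G_0^2\subseteq\F_2^{2n}$. If a more self-contained write-up is preferred, one can replace the abstract inclusion by the explicit descriptions $(\G_0^2)^\perp=\{(\x\mid\y)\in\F_2^{2n}:\x+\y\in\C_0^\perp\}$ and $\G^\perp=\{(\x\mid\y)\in\F_2^{2n}:\x+\y\in\C_0^\perp,\ \y\in\G_1^\perp\}$, obtained by a direct orthogonality computation against the two row blocks; these make the inclusion transparent and also show that the bound is in general not tight.
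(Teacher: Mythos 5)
Your argument is correct. The paper leaves this theorem unproved (the preceding paragraph only shows $G''=G$ is triorthogonal with $k$ odd-weight rows, giving $[[2n,k]]$), and your distance bound via $\G_0^2\subseteq\G\Rightarrow\G^\perp\subseteq(\G_0^2)^\perp$, so that $(\G_0^2)^\perp\setminus\G^\perp$ is a nonempty subset of the nonzero vectors of $(\G_0^2)^\perp$, is the natural and complete way to fill that gap.
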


\subsection{New triorthogonal matrices from known ones}
At the end of this section, we give some useful ways to construct new triorthogonal matrices \red{from} known ones.
\begin{prop}\label{direct sum 2}
	Let $G_0$ be a binary triorthogonal matrix of size $m$-by-$n$. If a matrix $A$ is made up of any $r$ $(2r\leq m)$ rows of $G_0$, and $r$ rows of the remaining $m-r$ rows of $G_0$ form a matrix $B$, then $A+B$ is triorthogonal.
\end{prop}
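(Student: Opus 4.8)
The plan is to verify directly the two defining relations of Definition~\ref{tri matrix} for $A+B$: expand each pairwise and each triple product multilinearly and observe that every resulting term is a pairwise or triple product of \emph{distinct} rows of $G_0$, hence vanishes modulo $2$ by the triorthogonality of $G_0$.

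First I would fix notation. Let $\mathbf{a}_1,\dots,\mathbf{a}_r$ be the rows forming $A$ and $\mathbf{b}_1,\dots,\mathbf{b}_r$ the rows forming $B$. By hypothesis these are $2r$ rows of $G_0$ occupying $2r$ pairwise distinct positions — this is exactly where the conditions $2r\le m$ and ``$B$ consists of $r$ of the \emph{remaining} $m-r$ rows'' are used. The $i$-th row of the $r$-by-$n$ matrix $A+B$ is $\mathbf{a}_i+\mathbf{b}_i$, so I must show $\modeq{|(\mathbf{a}_i+\mathbf{b}_i)\w(\mathbf{a}_j+\mathbf{b}_j)|}{0}$ for all $i<j$, and $\modeq{|(\mathbf{a}_i+\mathbf{b}_i)\w(\mathbf{a}_j+\mathbf{b}_j)\w(\mathbf{a}_k+\mathbf{b}_k)|}{0}$ for all $i<j<k$.

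For the pairwise relation I would apply the bilinearity rule $\modeq{|\x\w(\y+\z)|}{|\x\w\y|+|\x\w\z|}$ in each slot to get
\[
|(\mathbf{a}_i+\mathbf{b}_i)\w(\mathbf{a}_j+\mathbf{b}_j)| \equiv |\mathbf{a}_i\w\mathbf{a}_j|+|\mathbf{a}_i\w\mathbf{b}_j|+|\mathbf{b}_i\w\mathbf{a}_j|+|\mathbf{b}_i\w\mathbf{b}_j| \pmod{2}.
\]
Since $i\ne j$ and the $\mathbf{a}$-positions are disjoint from the $\mathbf{b}$-positions, each of the four terms is $|\g\w\g'|$ for two rows $\g,\g'$ of $G_0$ sitting at distinct positions, hence $\equiv 0\pmod 2$ by part~(1) of Definition~\ref{tri matrix} applied to $G_0$; the sum is therefore $\equiv 0$. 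For the triple relation the same expansion — now iterated over the three wedge-slots — rewrites $|(\mathbf{a}_i+\mathbf{b}_i)\w(\mathbf{a}_j+\mathbf{b}_j)\w(\mathbf{a}_k+\mathbf{b}_k)|$ modulo $2$ as the sum of the eight products $|\u\w\v\w\wf|$ with $\u\in\{\mathbf{a}_i,\mathbf{b}_i\}$, $\v\in\{\mathbf{a}_j,\mathbf{b}_j\}$, $\wf\in\{\mathbf{a}_k,\mathbf{b}_k\}$. As $i,j,k$ are distinct and the $2r$ positions are distinct, $\u,\v,\wf$ always come from three distinct rows of $G_0$, so each term is $\equiv 0$ by part~(2) of Definition~\ref{tri matrix} (applied after listing the three indices in increasing order, which is harmless since $|\u\w\v\w\wf|=\sum_j u_jv_jw_j$ does not depend on the order), and again the sum vanishes. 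Hence $A+B$ is triorthogonal.

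I expect no serious obstacle: the content is a short multilinear computation together with a bookkeeping check. The one point that genuinely needs care is the \emph{distinctness} of the row positions used, since Definition~\ref{tri matrix} imposes its relations only on strictly increasing index tuples; the argument would fail if, after expansion, a ``diagonal'' term such as $|\g\w\g|$ (an ordinary, possibly odd, row weight) or $|\g\w\g\w\g|$ appeared. Because $A$ is built from $r$ rows of $G_0$ and $B$ from $r$ of the other $m-r$ rows, no such diagonal term ever occurs, and every term that does occur is controlled by the triorthogonality of $G_0$, so the proof goes through.
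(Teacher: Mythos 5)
Your proof is correct and follows essentially the same route as the paper's: expand the pairwise and triple wedge products multilinearly and kill each resulting term by applying the triorthogonality of $G_0$ to distinct rows. You are slightly more explicit than the paper about the key bookkeeping point — that the $2r$ rows occupy distinct positions so no ``diagonal'' terms like $|\g\w\g|$ or $|\g\w\g\w\g|$ appear — which is a worthwhile clarification but not a different argument.
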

\begin{proof}
	Let the $i$-th rows of $A$ and $B$ be $\af_i$ and $\be_i$, respectively. Then for any $1\leq i<j<k\leq r$, we have
	\[\begin{aligned}
		&|(\af_i+\be_i)\w(\af_j+\be_j)| \\
		&\equiv |\af_i\w \be_j| + |\be_i\w \af_j| + |\af_i\w \be_j| + |\be_i\w \be_j| \pmod{2}\\
		&\equiv 0\pmod{2},\text{ and }\\
		&|(\af_i+\be_i)\w(\af_j+\be_j)\w(\af_k+\be_k)|  \\
		&\equiv |\af_i\w \af_j\w \af_k| + |\af_i\w \af_j\w \be_k| + |\af_i\w \be_j\w \af_k| + |\af_i\w \be_j\w \be_k| \\
		&+ |\be_i\w \af_j\w \af_k| + |\be_i\w \af_j\w \be_k| + |\be_i\w \be_j\w \af_k|+ |\be_i\w \be_j\w \be_k| \pmod{2} \\
		&\equiv 0\pmod{2}.
	\end{aligned}\]
	These show that $A+B$ is triorthogonal.
\end{proof}

In what follows, we describe a building-up type construction~\cite{build up over F2} to obtain a triorthogonal matrix of a larger size from a given triorthogonal matrix.

\begin{thm}\label{new building up}
	Let \[G_0 = \begin{bmatrix}
		&	\g_1  & \\
		&	\g_2  & \\
		&	\vdots & \\
		&	\g_n & \\
	\end{bmatrix}\] be a triorthogonal matrix of size $n$-by-$m$. Let $\x$ be a vector of length $m$ over $\F_2$, $\one$ and $\bm{0}$ be the all-one and the all-zero vectors of length $m$, respectively. We have
	\[
	G =\left[  \begin{array}{c|c|cc}
		\one &\bm{0} &\x\\
		\hline
		\y_1 &\y_1 &\g_1\\
		\vdots &\vdots &\vdots\\
		\y_n &\y_n &\g_n \\
	\end{array} \right]
	\] is a triorthogonal matrix, where $\y_i = \x\w \g_i$.
\end{thm}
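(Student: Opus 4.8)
The plan is to verify Definition \ref{tri matrix} directly for $G$, using nothing beyond its block shape and the triorthogonality of $G_0$. Write the top row of $G$ as $\u_0=(\one\mid\bm{0}\mid\x)$ and, for $1\le i\le n$, write the $(i+1)$-th row as $\u_i=(\y_i\mid\y_i\mid\g_i)$, where $\y_i=\x\w\g_i$. Two elementary facts do all the work. First, for any $i,j$,
\[
\y_i\w\y_j=(\x\w\g_i)\w(\x\w\g_j)=\x\w\g_i\w\g_j ,
\]
since $\x\w\x=\x$ and $\w$ is commutative and associative; in particular $\y_i\w\y_i=\y_i$. Second, any vector of the form $(\wf\mid\wf\mid\z)$ whose first two length-$m$ blocks coincide has weight $2|\wf|+|\z|\equiv|\z|\pmod{2}$, so modulo $2$ only its third block counts. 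The whole proof is a bookkeeping of the three column blocks combined with these two observations.

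For condition (1) I would split into the pair $(\u_0,\u_i)$ and the pair $(\u_i,\u_j)$ with $i<j$. In the first case the coordinatewise product is $(\one\w\y_i\mid\bm{0}\w\y_i\mid\x\w\g_i)=(\y_i\mid\bm{0}\mid\y_i)$, of weight $2|\y_i|\equiv0\pmod{2}$. In the second case the product is $(\y_i\w\y_j\mid\y_i\w\y_j\mid\g_i\w\g_j)$, which by the second observation has weight $\equiv|\g_i\w\g_j|\equiv0\pmod{2}$, the last step being condition (1) of Definition \ref{tri matrix} applied to $G_0$.

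Condition (2) goes the same way, with three sub-cases. For the triple $(\u_0,\u_i,\u_j)$ with $i<j$, the coordinatewise product is $(\one\w\y_i\w\y_j\mid\bm{0}\mid\x\w\g_i\w\g_j)$; using $\y_i\w\y_j=\x\w\g_i\w\g_j$ this equals $(\x\w\g_i\w\g_j\mid\bm{0}\mid\x\w\g_i\w\g_j)$, of weight $2|\x\w\g_i\w\g_j|\equiv0\pmod{2}$. For the triple $(\u_i,\u_j,\u_k)$ with $i<j<k$, the product is $(\y_i\w\y_j\w\y_k\mid\y_i\w\y_j\w\y_k\mid\g_i\w\g_j\w\g_k)$, of weight $\equiv|\g_i\w\g_j\w\g_k|\equiv0\pmod{2}$ by condition (2) of Definition \ref{tri matrix} for $G_0$ (the case $(\u_0,\u_0,\cdot)$ never arises, since triples use three distinct row indices). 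Together with condition (1), this shows $G$ satisfies Definition \ref{tri matrix}.

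I do not expect a genuine obstacle here; the only point to get right is the alignment of the three column blocks and the realization that duplicating the $\y_i$-block is exactly what forces the otherwise-unconstrained ``first block'' contributions to vanish modulo $2$, leaving only products of the $\g_i$, which the hypothesis on $G_0$ controls. It is worth stressing what the statement does \emph{not} claim: it says nothing about which rows of $G$ have odd weight, so no parameters of $\TriCode(G)$ are asserted, and the verification reduces cleanly to the two displayed congruences of Definition \ref{tri matrix}.
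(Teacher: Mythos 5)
Your proof is correct and follows essentially the same route as the paper: split the rows into the special top row versus the rest, compute each coordinatewise product blockwise, use $\y_i\wedge\y_j = \x\wedge\g_i\wedge\g_j$ (from $\x\wedge\x=\x$), and observe that duplicated blocks contribute an even amount so only the $\g$-block terms survive modulo $2$, which triorthogonality of $G_0$ then kills. The paper writes these as weight identities like $|\Gf_{i+1}\wedge\Gf_{j+1}| = 2|\y_i\wedge\y_j| + |\g_i\wedge\g_j|$ rather than stating the block observation abstractly, but the content is identical.
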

\begin{proof}
	Let $\Gf_i$ be the $i$-th row of $G$. First we check the rows except the first row. For any $1\leq i<j<k\leq n$, we have
	\[
	\begin{aligned}
		|\Gf_{i+1}\w \Gf_{j+1}| &= 2|\y_i\w \y_j| + |\g_i\w \g_j| \equiv 0\pmod{2}, \\
		|\Gf_{i+1}\w \Gf_{j+1}\w \Gf_{k+1}| &= 2|\y_i\w \y_j\w \y_k| + |\g_i\w \g_j\w \g_k| \equiv 0\pmod{2}.
	\end{aligned}
	\]
	Now we check the first row. Let $\af = (\bm{1},\bm{0},\x)$. For any $1\leq i<j\leq n$, we have
	\[
	\begin{aligned}
		|\af \w \Gf_{i+1}| &= |\one\w\y_i| + |\x\w \g_i| \\
		&= 2|\x\w \g_i|\equiv0\pmod{2}, \\
		|\af\w \Gf_{i+1}\w \Gf_{j+1}| &= |\one\w \y_i\w \y_j| + |\x\w \g_i\w \g_j|\\
		&= |(\x\w \g_i) \w (\x\w \g_j)| + |\x\w \g_i\w \g_j|\\
		&= 2|\x\w \g_i \w \g_j| \equiv 0\pmod{2}. \\
	\end{aligned}
	\]By the definition of the triorthogonal matrices, we obtain the conclusion.
\end{proof}

\section{Triorthogonal codes from self-dual codes}
In this section, we first discuss a class of triorthogonal matrices derived from generator matrices of self-dual codes. Considering the speciality of self-dual codes, we will give an algorithm to find as many triorthogonal spaces as possible from self-dual codes. In this section, we only consider full-rank matrices.

\subsection{Background material}
An immediate question is when the generator matrix of a self-dual code is triorthogonal?
A complete characterization of generator matrices of self-dual codes that are triorthogonal is shown in Theorem \ref{self-triorthogonal}. We first give a useful necessary and sufficient condition.

\begin{thm}\label{judge}
	If $G$ is a generator matrix of a self-dual code $\C$, then $G$ is triorthogonal if and only if $\x\w \y\in\C$ for any $\x,\y\in\C$.
\end{thm}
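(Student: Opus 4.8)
The strategy is to use self-duality to make the pairwise (quadratic) condition in Definition \ref{tri matrix} hold automatically, and then to reduce the triple (cubic) condition on arbitrary codewords to the same condition on the rows of $G$ by a multilinear expansion. Write $\g_1,\dots,\g_r$ for the rows of $G$, so $\C=\rowspan\{\g_1,\dots,\g_r\}$ and $\C=\C^{\perp}$. Since $\g_a\in\C=\C^{\perp}$ and $\g_b\in\C$, we have $\modeq{|\g_a\w\g_b|}{0}$ for all $a,b$; in particular $\modeq{|\g_a|}{0}$ (using $\g\w\g=\g$), so every row of $G$ has even weight. Hence part~(1) of Definition \ref{tri matrix} holds for \emph{any} generator matrix of a self-dual code, and $G$ is triorthogonal if and only if $\modeq{|\g_a\w\g_b\w\g_c|}{0}$ for all $1\le a<b<c\le r$.

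The bridge between the two sides is the observation that, since $\w$ is associative and coordinatewise, $|\x\w\y\w\z|\bmod 2$ is the Euclidean inner product of $\x\w\y$ with $\z$. Thus $\x\w\y\in\C^{\perp}$ if and only if $\modeq{|\x\w\y\w\z|}{0}$ for every $\z\in\C$, and using $\C=\C^{\perp}$ the condition ``$\x\w\y\in\C$ for all $\x,\y\in\C$'' is equivalent to ``$\modeq{|\x\w\y\w\z|}{0}$ for all $\x,\y,\z\in\C$''. The ($\Leftarrow$) direction follows at once: if $\x\w\y\in\C$ for all $\x,\y\in\C$, then $\g_a\w\g_b\in\C=\C^{\perp}$, so $\modeq{|\g_a\w\g_b\w\g_c|}{0}$, which is precisely the condition isolated above, and $G$ is triorthogonal.

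For ($\Rightarrow$), assume $G$ is triorthogonal and fix $\x,\y,\z\in\C$, say $\x=\sum_{i\in I}\g_i$, $\y=\sum_{j\in J}\g_j$, $\z=\sum_{k\in K}\g_k$. Distributing $\w$ over $+$ and using $|\u+\v|\equiv|\u|+|\v|\pmod{2}$, I expand
\[
|\x\w\y\w\z|\equiv\sum_{i\in I}\sum_{j\in J}\sum_{k\in K}|\g_i\w\g_j\w\g_k|\pmod{2}.
\]
Each summand is $\equiv 0\pmod{2}$: if $i=j=k$ then, since $\g\w\g=\g$, it equals $|\g_i|\equiv 0$; if exactly two of $i,j,k$ coincide it collapses, again by $\g\w\g=\g$, to $|\g_a\w\g_b|$ for the remaining pair, which is $\equiv 0$ by self-orthogonality; and if $i,j,k$ are distinct, commutativity of $\w$ lets us reorder them increasingly and the triorthogonality of $G$ gives $\equiv 0$. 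Hence $\modeq{|\x\w\y\w\z|}{0}$, and letting $\z$ range over $\C$ yields $\x\w\y\in\C^{\perp}=\C$.

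The only delicate point is the bookkeeping in the triple sum: terms with a repeated index must first be collapsed via $\g\w\g=\g$ before being discarded, and they then vanish for a \emph{different} reason (even weight, respectively self-orthogonality) than the all-distinct terms (triorthogonality of $G$). Everything else is a direct application of the elementary properties of $\w$ recorded in Section~2.
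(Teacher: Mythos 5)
Your proof is correct and follows essentially the same route as the paper's: self-duality handles the pairwise condition and the degenerate terms, and the cubic condition on codewords reduces to the cubic condition on rows via the multilinear expansion, with the bridge $\x\w\y\in\C^{\perp}\Leftrightarrow|\x\w\y\w\z|\equiv0$ for all $\z\in\C$. Where the paper compresses the forward direction into ``each row of $G$ has even Hamming weight and thus $\C$ is a triorthogonal space,'' you spell out the triple-sum expansion and the case analysis on repeated indices, which is a welcome clarification but not a different argument.
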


\begin{proof}
	Suppose that $G$ is triorthogonal. Because $G$ is also the generator matrix of a self-dual code $\C$, each row of $G$ has even Hamming weight and thus $\C$ is a triorthogonal space, which means $|\x\w \y\w \z| \equiv 0\pmod{2}$ for any $\x,\y,\z\in\C$. Fixing $\x,\y$, we have $|\x\w \y\w \z|\equiv0\pmod{2}$ for all $\z\in\C$, so $\x\w \y\in\C^{\perp} =\C$. Then by the arbitrariness of $\x, \y$, we get the result we need.
	
	On the other hand, if we have $\x\w \y\in \C$ for any $\x,\y\in\C$, then we have
	\[|\x\w \y\w \z| = |(\x\w \y)\w \z| \equiv 0\pmod{2} {\mbox{ for any }} \x,\y,\z\in\C \]
	since $\C$ is an orthogonal space. Therefore, $\C$ is a triorthogonal space and its generator matrix $G$ is triorthogonal. The result follows.
\end{proof}

One can just check the basis of a linear code $\C$ to see if $\C$ is a triorthogonal space.

\begin{lm}\label{tri basis is tri}
	Let $H$ be a basis of a binary linear code $\C$. If $\af\w\be\in\C$ for any $\af,\be\in H$, then we have $\x\w \y\in\C$ for any $\x,\y\in\C$.
\end{lm}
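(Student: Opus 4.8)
The plan is to reduce the general statement to the hypothesis by exploiting the bilinearity of the operation $\w$ recorded in Proposition 1. Write $H = \{\af_1,\dots,\af_r\}$ for the given basis of $\C$. Since we work over $\F_2$, every $\x\in\C$ can be written as $\x = \sum_{i\in S}\af_i$ for some subset $S\subseteq\{1,\dots,r\}$, and likewise $\y = \sum_{j\in T}\af_j$ for some $T\subseteq\{1,\dots,r\}$.

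Next I would expand $\x\w\y$ by applying the distributive identity $\u\w(\v+\wf) = (\u\w\v)+(\u\w\wf)$ of Proposition 1 repeatedly, once for each summand in each of the two arguments, to obtain
\[
\x\w\y = \sum_{i\in S}\sum_{j\in T}\af_i\w\af_j.
\]
Formally this is an induction on $|S|+|T|$ (or just a finite iteration of Proposition 1); this is the only step involving any manipulation, and it is entirely routine.

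Then I would argue termwise. For a pair $(i,j)$ with $i\neq j$, the vector $\af_i\w\af_j$ lies in $\C$ by the hypothesis of the lemma; for $i=j$, Proposition 1 gives $\af_i\w\af_i = \af_i$, which is in $\C$ because $\af_i\in H\subseteq\C$. Hence every summand of $\sum_{i\in S}\sum_{j\in T}\af_i\w\af_j$ belongs to $\C$, and since $\C$ is a linear code it is closed under addition, so $\x\w\y\in\C$. As $\x$ and $\y$ were arbitrary elements of $\C$, the conclusion follows.

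The only (very mild) obstacle is the bookkeeping of the double expansion over $\F_2$ and remembering the diagonal terms $\af_i\w\af_i$, which are not covered directly by the hypothesis "$\af\w\be\in\C$ for $\af,\be\in H$" in the guise one might first expect; invoking idempotence ($\af_i\w\af_i=\af_i$) disposes of them immediately, and there is no further difficulty.
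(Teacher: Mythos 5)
Your proof is correct and follows essentially the same route as the paper: expand $\x\w\y$ by bilinearity into $\sum_{i\in S}\sum_{j\in T}\af_i\w\af_j$ and use linearity of $\C$ to conclude. Your explicit remark about the diagonal terms $\af_i\w\af_i=\af_i$ is a small point of care that the paper's proof leaves implicit (the hypothesis there is stated for any $\af,\be\in H$, not necessarily distinct), but the argument is the same.
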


\begin{proof}
	Let $H = \{\seq{\af}\}$ and $\x = \sum_{i\in\Lambda_x}\af_i,\;\y = \sum_{j\in\Lambda_y}\af_j$ for some index sets $\Lambda_x, \Lambda_y$. Then, since $\C$ is linear, we have
	\[
	\x\w \y
	=  (\sum_{i\in\Lambda_x}\af_i)\w(\sum_{j\in\Lambda_y}\af_j)
	= \sum_{i\in\Lambda_x}\sum_{j\in\Lambda_y}(\af_i\w \af_j) \in\C
	\]because $\C$ is linear.		
\end{proof}

\begin{ex} \label{all tri}
	Let $G_1 = [\;I_k\mid I_k\;]$, where $I_k$ is the identity matrix of size $k$-by-$k$. Then $G_1$ is a triorthogonal matrix and $\G_1=\rowspan\{G_1\}$ is a triorthogonal space, since any two rows $\g_1,\g_2$ of $G_1$ satisfy $\g_1\w \g_2 = \bm{0} \in\G_1$. For example, for $n=8$,
	\[
	G_1 = \left[ \begin{array}{cccc|cccc}
		1 & 0 & 0 & 0 & 1 & 0 & 0 & 0  \\
		0 & 1 & 0 & 0 & 0 & 1 & 0 & 0  \\
		0 & 0 & 1 & 0 & 0 & 0 & 1 & 0  \\
		0 & 0 & 0 & 1 & 0 & 0 & 0 & 1  \\
	\end{array} \right] .
	\]
	These examples also show that for any positive integer $k$, there always exists a self-dual code $\C$ of parameters $[2k,k]$ which is also triorthogonal.
\end{ex}
\begin{ex} \label{not tri}
	Consider a linear self-dual code $\C$ with generator matrix
	\[
	G_2 = \begin{bmatrix}
		&1&0&0&0&0&1&1&1&\\
		&0&1&0&0&1&0&1&1&\\
		&0&0&1&0&1&1&0&1&\\
		&0&0&0&1&1&1&1&0&\\
	\end{bmatrix}
	= \begin{bmatrix}
		&\g_1&\\
		&\g_2&\\
		&\g_3&\\
		&\g_4&\\
	\end{bmatrix}.
	\]
	Unfortunately, $G_2$ is not a triorthogonal matrix, for $|\g_1\wedge \g_2\wedge \g_3| \equiv 1\pmod{2}$ and $\g_1\w \g_2 = (0,0,0,0,0,0,1,1)\notin\C$. Moreover,  $\g_i\w \g_j\notin\C$ for any $1\leq i<j\leq 4$.
\end{ex}

These two examples inspire the following theorem.

\begin{thm}\label{self-triorthogonal}
	If $C$ is a binary self-dual code, and also a triorthogonal space of length $2k$, then $[I_k\mid A]$ is a generator matrix of $C$ if and only if each row and each column of $A$ has only one $1$, i.e., $A$ can be changed to $I_k$ by permuting the columns.
\end{thm}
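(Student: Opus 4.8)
The plan is to translate both hypotheses into linear-algebraic conditions on $A$ and then force $A$ to be a permutation matrix by a disjoint-support counting bound. First I would record what self-duality gives. Writing $G=[\,I_k\mid A\,]$, every codeword of $C$ has the form $(\v\mid \v A)$ for some $\v\in\F_2^k$; in particular, a codeword whose first $k$ coordinates all vanish must be $\zero_{2k}$. Self-duality of $C$ is equivalent to $GG^T=O$ together with $\rank G=k$; since $\rank G=k$ is automatic and $GG^T=I_k+AA^T$, self-duality is exactly $AA^T=I_k$ over $\F_2$. Reading this off the diagonal, each row $a_i$ of $A$ has $|a_i|\equiv 1\pmod 2$, hence $|a_i|\ge 1$.

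Next comes the key step, which uses triorthogonality. Since $C$ is self-dual and a triorthogonal space, for any $\x,\y\in C$ and every $\z\in C$ we have $|\x\w\y\w\z|=|(\x\w\y)\w\z|\equiv 0\pmod 2$, so $\x\w\y\in C^{\perp}=C$ (one may instead just quote Theorem \ref{judge}, noting that the self-orthogonality of $C$ makes $G$ a triorthogonal matrix). Applying this to two distinct rows $\g_i,\g_j$ of $G$: the first $k$ coordinates of $\g_i\w\g_j$ are $\e_i\w\e_j=\zero_k$, so by the remark above $\g_i\w\g_j=\zero_{2k}$, and therefore the last $k$ coordinates give $a_i\w a_j=\zero_k$. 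Thus the $k$ rows of $A$ have pairwise disjoint supports inside $k$ columns, so $\sum_{i=1}^k|a_i|\le k$; combined with $|a_i|\ge 1$ for every $i$ this forces $\sum_{i=1}^k|a_i|=k$ and $|a_i|=1$ for all $i$. Disjointness of the $k$ singleton supports then makes them the $k$ distinct columns, so every row and every column of $A$ contains exactly one $1$; that is, $A$ is a permutation matrix.

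For the converse, suppose each row and column of $A$ has exactly one $1$, say $A$ realizes the permutation $\sigma$. Then each row $\g_i=(\e_i\mid\e_{\sigma(i)})$ of $[\,I_k\mid A\,]$ has weight $2$, and for $i\neq j$ the supports of $\g_i$ and $\g_j$ are disjoint, so $\g_i\w\g_j=\zero_{2k}$. Hence the code generated is self-orthogonal of dimension $k$ and length $2k$, so it is self-dual; and since $\g_i\w\g_j\in C$ for all $i,j$, Lemma \ref{tri basis is tri} gives $\x\w\y\in C$ for all $\x,\y\in C$, whence $|\x\w\y\w\z|=|(\x\w\y)\w\z|\equiv 0\pmod2$ and $C$ is a triorthogonal space. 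Thus $[\,I_k\mid A\,]$ with $A$ a permutation matrix is exactly a standard-form generator matrix of a self-dual triorthogonal code of length $2k$ (as in Example \ref{all tri}), which is the meaning to assign to the ``if'' direction.

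I do not expect a serious obstacle here: the whole argument hinges on the single observation that $\g_i\w\g_j$ has all-zero first half and so collapses to the zero codeword, after which triorthogonality is used only through ``$\x\w\y\in C$'' and everything reduces to the elementary counting inequality $\sum_i|a_i|\le k\le\sum_i|a_i|$. The one point to state carefully is the interpretation of the converse: it should be phrased as ``$[\,I_k\mid A\,]$ generates a self-dual triorthogonal code of length $2k$'', so that the theorem reads as a characterization of the standard-form generator matrices of such codes.
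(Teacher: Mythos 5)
Your proposal is correct and takes essentially the same route as the paper: reduce to the observation that for distinct rows $\g_i,\g_j$ of $[\,I_k\mid A\,]$ the product $\g_i\w\g_j$ vanishes on the first $k$ coordinates and so must be the zero codeword, giving pairwise disjoint row supports in $A$, and combine this with the odd-row-weight fact from self-duality to force every row of $A$ to be a single $1$. The only difference is cosmetic: you close the counting with the clean inequality $k\le\sum_i|a_i|\le k$, whereas the paper runs an explicit (and somewhat awkwardly indexed) greedy partition of the columns into blocks $s_1,s_2,\dots$; and your converse handles a general permutation matrix directly rather than reducing to $[\,I_k\mid I_k\,]$ as in Example~\ref{all tri}.
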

\begin{proof}
	Example \ref{all tri} has shown that $[I_k\mid I_k]$ is a generator matrix of a self-dual code which is also triorthogonal.
	
	On the other hand, assuming that $\x,\y$ are any two distinct rows of $[I_k\mid A]$, then the elements in the first $k$ coordinates of $\x\w\y$ are all zeros. This means if the last $k$ coordinates of $\x\w\y$ are not zeros, then $\x\w\y\notin \C$ because the first $k$ coordinates of all nonzero codewords in $\C$ cannot be all $0$, which means $\x\w\y = \zero$. Let $\x$ be a row of $[I_k\mid A]$, since $C$ is self-dual, and the first $k$ coordinates of $\x$ has exactly one $1$, then the number of $1$ in the last $k$ coordinates of $\x$ is odd. Combining these two conditions, we have that each row of $[I_k\mid A]$ can only have two ones, and we show it. We denote $A=(a_{ij})_{k\times k}$. Since \\
	Condition 1: for any two distinct rows $\x,\y$ of $A$, we have $\x\w\y = \zero$;\\
	Condition 2: for each row $\x$ of $A$, we have ${\mbox{wt}}(\x)$ is odd.\\
	Without loss of generality, let $a_{(1,1)},\dots,a_{(1,s_1)}$ be $1$ and $a_{(1,s_1+1)},\dots,a_{(1,k)}$ be $0$, we have for all $2\leq i\leq k$,  $a_{(i,1)},\dots,a_{(i,s_1)}$ must be $0$ by Condition 1. Now let $a_{(2,s_1+1)},\dots,a_{(2,s_2)}$ be $1$ and $a_{(2,s_2+1)},\dots,a_{(2,k)}$ be $0$, we have for all $3\leq i\leq k$, $a_{(i,s_1+1)},\dots,a_{(i,s_2)}$ must be $0$ by Condition 1. If we continue this process, we can get a sequence $\{s_i\}_{i=1}^k$ which satisfies
	\[\begin{cases}
			&\sum_{i=1}^k s_i = k, \\
			&s_i \text{ are odd for all }1\leq i\leq k. \text{ (By Condition 2)}
	\end{cases}\]
	This shows that $s_i=1$ for all $1\leq i\leq k$, which means that each row of $A$ has only one $1$, and each column of $A$ has only one $1$ by Condition 1. These completes the proof.
\end{proof}

\begin{re}
	Theorem \ref{self-triorthogonal} shows that Example \ref{all tri} is the only case for the binary self-dual codes which are also triorthogonal spaces, in the sense of permutation equivalence.
\end{re}

\subsection{Longer triorthogonal codes from self-dual codes}
Let $G$ be a triorthogonal matrix of size $m$-by-$n$ whose first $k$ $(k>0)$ rows have odd weight and the remaining rows have even weights. $G_1$ consists of these $k$ rows, i.e.,
\[
G = \left[ \begin{array}{c}
	G_1 \\
	\hline
	G_0 \\
\end{array}\right].
\]
We show that $G_0$ cannot be a generator matrix of any self-dual code.
\begin{thm}
	$G$ is a binary triorthogonal matrix of size $m$-by-$2n$ that has $n$ rows with even weight, and the remaining $m-n$ $(m-n > 0)$ rows with odd weight. If $\G_0$ denotes the span of all even-weight rows of $G$, then $\G_0$ is not a self-dual code.
\end{thm}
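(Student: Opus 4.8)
The plan is to argue by contradiction: assume $\G_0$ is self-dual and exhibit a vector of odd weight that is forced to lie in it. First I would extract the only two facts needed. Since $G$ is triorthogonal, condition (1) of Definition~\ref{tri matrix} says that \emph{any} two rows of $G$ are Euclidean-orthogonal over $\F_2$; in particular, every odd-weight row of $G$ is orthogonal to every even-weight row of $G$. By bilinearity of the inner product over $\F_2$ (the second Proposition of Section~2.1), an odd-weight row is then orthogonal to every linear combination of the even-weight rows, i.e. to all of $\G_0$. Hence each odd-weight row $\g$ of $G$ satisfies $\g\in\G_0^{\perp}$.

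Next I would use the hypothesis $m-n>0$, which guarantees that at least one odd-weight row $\g$ of $G$ actually exists. If $\G_0=\G_0^{\perp}$, then this $\g$ lies in $\G_0$ itself. But every vector of $\G_0$ has even Hamming weight: $\G_0$ is spanned by even-weight vectors and the sum of two even-weight vectors is again of even weight (equivalently, a self-dual code is self-orthogonal, so $|\c|\equiv|\c\w\c|\equiv 0\pmod 2$ for every $\c\in\G_0$). This contradicts $|\g|$ being odd, so $\G_0$ cannot be self-dual.

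I do not expect any genuine obstacle here; the only points worth stating carefully are that it is the pairwise-orthogonality half of triorthogonality — not the triple condition — that does all the work, and that the span of the even-weight rows stays inside the even-weight subcode of $\F_2^{2n}$, so that a single odd-weight row can never be reached. The quantitative hypotheses (length $2n$, exactly $n$ even-weight rows) are not essential to this conclusion; they merely describe the only situation in which one might have hoped $\G_0$ to be self-dual, and the argument above goes through verbatim for any triorthogonal matrix that has at least one odd-weight row.
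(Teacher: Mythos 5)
Your proof is correct and is essentially the same argument as the paper's: assume $\G_0=\G_0^{\perp}$, observe that pairwise orthogonality forces an odd-weight row of $G$ into $\G_0^{\perp}=\G_0$, and conclude by noting every vector of $\G_0$ has even weight. You merely spell out a few steps (existence of an odd-weight row from $m-n>0$, extension to all of $\G_0$ by bilinearity) that the paper leaves implicit.
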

\begin{proof}
	Let $\G_0$ be a self-dual code, which means that $\G_0 = \G_0^{\perp},$ implying that each codeword in $\G_0$ has even weight. If we choose any odd-weight row $\x$ of $G$, then $|\x\w\y|\equiv 0\pmod{2}$ for any even-weight rows $\y$ of $G$ since $G$ is triorthogonal. Therefore, $\x\in\G_0^{\perp}=\G_0$ and ${\mbox{wt}}(\x)$ is odd, which is a contradiction.
\end{proof}

We give a construction for triorthogonal matrices with odd-weight rows by using the generator matrices of self-dual codes.
Let $\C$ be a self-dual codes with generator matrix $G$ of size $n$-by-$2n$, and $\C$ is also a triorthogonal space. We choose any triorthogonal matrix $A$ of size $t$-by-$a$, where the first $k$ rows of $A$ have odd weights and these $k$ rows form the matrix $A_1$. We have the following matrix
\[
G_1 = \left[ \begin{array}{c|c}
	A &O \\
	\hline
	O &G \\
\end{array}\right],
\]
where $O$ is the all-zero matrix.
By Proposition \ref{direct sum 1}, $G_1$ is triorthogonal. The following conclusions are the relationship between the parameters of the triorthogonal codes corresponding to \red{these matrices}.
\begin{lm}
	\red{If $A,B,C$ are three binary linear code} where $B\subsetneqq A$, then \[(A\oplus C)\setminus(B\oplus C) = (A\setminus B)\oplus C. \]
\end{lm}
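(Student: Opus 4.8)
The plan is to prove this set identity by a direct double-inclusion argument carried out at the level of individual codewords, in the same bookkeeping style already used in the proof of Lemma \ref{diff oplus}. First I would recall that when $X,Y$ are arbitrary (not necessarily linear) subsets of the relevant ambient spaces, the symbol $X\oplus Y$ is understood as $\{(\u|\v):\u\in X,\ \v\in Y\}$; in particular $(A\setminus B)\oplus C = \{(\u|\v):\u\in A\setminus B,\ \v\in C\}$, and an element $(\u|\v)$ of $A\oplus C$ lies in $B\oplus C$ if and only if $\u\in B$, since the second coordinate already ranges over $C$.

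For the inclusion $(A\oplus C)\setminus(B\oplus C)\subseteq(A\setminus B)\oplus C$, take an arbitrary $(\u|\v)$ in the left-hand side. From $(\u|\v)\in A\oplus C$ we get $\u\in A$ and $\v\in C$; from $(\u|\v)\notin B\oplus C$, together with $\v\in C$, we must have $\u\notin B$. Hence $\u\in A\setminus B$ and $(\u|\v)\in(A\setminus B)\oplus C$. For the reverse inclusion, start from $(\u|\v)$ with $\u\in A\setminus B$ and $\v\in C$: then $\u\in A$ gives $(\u|\v)\in A\oplus C$, while $\u\notin B$ gives $(\u|\v)\notin B\oplus C$, so $(\u|\v)\in(A\oplus C)\setminus(B\oplus C)$. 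Combining the two inclusions yields the claimed equality.

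There is no real obstacle here — unlike Lemma \ref{diff oplus}, this statement needs no dimension count, no weight estimate, and no case analysis. The only point that deserves a word of care is that $A\setminus B$ and $B\oplus C$ are not linear codes, so the entire argument must be phrased set-theoretically rather than in terms of generator matrices; the hypothesis $B\subsetneqq A$ is used only to note that $A\setminus B$ (hence both sides) is nonempty, and if one dropped strictness the identity would simply read $\varnothing=\varnothing$.
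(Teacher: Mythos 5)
Your proof is correct, and it takes a cleaner route than the paper's. The paper establishes one inclusion, $(A\setminus B)\oplus C\subseteq (A\oplus C)\setminus(B\oplus C)$, and then deduces equality from the cardinality computation $|(A\setminus B)\oplus C| = (|A|-|B|)|C| = |A||C|-|B||C| = |(A\oplus C)\setminus(B\oplus C)|$, mirroring the counting style of the proof of Lemma~\ref{diff oplus}. You instead prove both inclusions directly at the level of pairs $(\u|\v)$, with the single key observation that membership of $(\u|\v)\in A\oplus C$ in $B\oplus C$ is equivalent to $\u\in B$. Your argument is more elementary and does not depend on the codes being finite, whereas the paper's counting step trades a second set-theoretic check for a one-line arithmetic identity. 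Your side remarks are also right: the identity itself is purely set-theoretic (neither $A\setminus B$ nor $B\oplus C$ is linear, so no generator-matrix reasoning is available), and the strict inclusion $B\subsetneqq A$ serves only to keep both sides nonempty.
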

\begin{proof}\red{
	One can easily check $(A\setminus B)\oplus C\subseteq (A\oplus C)\setminus(B\oplus C)$ and $|(A\setminus B)\oplus C| = (|A|-|B|)|C| = |A||C|-|B||C| = (A\oplus C)\setminus(B\oplus C)$. }
\end{proof}

\begin{thm}\label{repeat col}
	Over $\F_2$, let $B_{n\times 2n}$ be a generator matrix of a self-dual code which is also triorthogonal. Let $A_{a\times b}$ be a triorthogonal matrix whose first $k$ $(k>0)$ rows have odd weight and the remaining rows have even weights. Let $A_1$ be the matrix consisting of the first $k$ rows of matrix
	\[
	A = \left[ \begin{array}{c}
		A_1 \\
		\hline
		A_0
	\end{array} \right].
	\]
	If the triorthogonal code $\TriCode(A)$ has parameters $[[b,k,d_Z]]$, then the triorthogonal code $\TriCode(G)$ has parameters $[[2n+b,k,d_Z]]$ if
	\[
	G = \left[\begin{array}{c|c}
		A &O \\
		\hline
		O &B \\
	\end{array} \right]
	=\left[\begin{array}{c|c}
		A_1 &O \\
		\hline
		A_0 &O \\
		O &B \\
	\end{array} \right].
	\]
\end{thm}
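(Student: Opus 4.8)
The plan is to verify the three parts of the claimed parameters $[[2n+b,k,d_Z]]$ for $\TriCode(G)$ separately. Triorthogonality of $G$ is immediate: appending all-zero columns to a triorthogonal matrix changes none of the products $\sum_j G_{aj}G_{bj}$ or $\sum_j G_{aj}G_{bj}G_{cj}$, so the sub-blocks $\left[A\mid O\right]$ and $\left[O\mid B\right]$ of $G$ are triorthogonal and Lemma \ref{gene matrix of direct sum} (equivalently Proposition \ref{direct sum 1}) shows $G$ is triorthogonal. Next, the rows of $G$ are the rows of $A_1$ followed by zeros ($k$ of them, odd weight), the rows of $A_0$ followed by zeros (even weight), and the rows of $B$ preceded by zeros; since $B$ generates a self-dual code, every codeword of that code, and in particular every row of $B$, has even weight. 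Hence $G$ is full-rank of length $2n+b$ with exactly $k$ odd-weight rows, so $\TriCode(G)$ encodes $k$ logical qubits into $2n+b$ qubits.

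The substance is the distance. Write $\C=\rowspan\{B\}$ (so $\C^\perp=\C$), $\G_A=\rowspan\{A\}$, $\G_{A0}=\rowspan\{A_0\}$, and let $\TriCode(G)=\CSS(\G_0;\G)$. From the block form, $\G=\G_A\oplus\C$, and since the even-weight rows of $G$ are exactly those coming from $A_0$ and from $B$, $\G_0=\G_{A0}\oplus\C$. Applying Lemma \ref{oplus and perp} together with $\C^\perp=\C$ gives $\G^\perp=\G_A^\perp\oplus\C$ and $\G_0^\perp=\G_{A0}^\perp\oplus\C$. Because $\G_{A0}\subseteq\G_A$ with $\dim\G_A-\dim\G_{A0}=k>0$, we have $\G_A^\perp\subsetneqq\G_{A0}^\perp$, so the lemma immediately preceding this theorem yields
\[
\G_0^\perp\setminus\G^\perp=(\G_{A0}^\perp\oplus\C)\setminus(\G_A^\perp\oplus\C)=(\G_{A0}^\perp\setminus\G_A^\perp)\oplus\C.
\]

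Finally I would compute the weight of this set. Every element has the form $(\x\mid\c)$ with $\x\in\G_{A0}^\perp\setminus\G_A^\perp$ and $\c\in\C$; since $\zero\in\G_A^\perp$, the set $\G_{A0}^\perp\setminus\G_A^\perp$ never contains $\zero$, so $\x\neq\zero$ always and ${\mbox{wt}}((\x\mid\c))={\mbox{wt}}(\x)+{\mbox{wt}}(\c)\ge{\mbox{wt}}(\x)$, with equality at $\c=\zero$. Thus ${\mbox{wt}}(\G_0^\perp\setminus\G^\perp)={\mbox{wt}}(\G_{A0}^\perp\setminus\G_A^\perp)$, which by definition is the distance of $A$, namely $d_Z$, giving the claimed parameters. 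The one step that needs care — and the only place where self-duality of $\C$ matters — is this last reduction: self-duality makes $\C$ a common direct summand of $\G_0^\perp$ and $\G^\perp$, so it factors out of the set difference, and the surviving component $\G_{A0}^\perp\setminus\G_A^\perp$ excludes $\zero$, which forces a minimum-weight nonzero vector to have zero $\C$-part; everything else is routine bookkeeping with the direct-sum lemmas already in hand.
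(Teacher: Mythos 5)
Your proof is correct and follows essentially the same route as the paper's: decompose $\G=\G_A\oplus\C$ and $\G_0=\G_{A0}\oplus\C$, apply Lemma \ref{oplus and perp} with $\C^\perp=\C$ to pull the dual inside, factor $\C$ out of the set difference via the lemma immediately preceding the theorem, and observe that the minimum weight of $(\G_{A0}^\perp\setminus\G_A^\perp)\oplus\C$ is attained when the $\C$-component is zero. The only difference is that you spell out the final weight reduction and the triorthogonality and rank checks, which the paper leaves implicit.
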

\begin{proof}
	Let $\TriCode(G) = \CSS(\G_0;\G)$, and $\G_b$ be the span of all rows of $B$. By Lemma \ref{direct sum}, we have $\G = \G_a\oplus\G_b$, where $\G_a$ is the span of all the rows of $A$. Note $\G_0 = \G_{a}^0\oplus \G_b$, where $\G_{a}^0$ is the span of all the rows of $A_0$. By Lemma \ref{oplus and perp}, we have $\G_0^\perp = (\G_{a}^0)^\perp\oplus \G_b^\perp = (\G_{a}^0)^\perp\oplus \G_b$. Therefore,
	\[\begin{aligned}
		{\mbox{wt}}(\G_0^\perp \setminus \G^{\perp}) &= {\mbox{wt}}(((\G_{a}^0)^\perp\oplus \G_b)\setminus(\G_a^\perp\oplus\G_b)) \\
											&= {\mbox{wt}}(((\G_{a}^0)^\perp\setminus \G_a^\perp)  \oplus \G_b ) \\
											&= {\mbox{wt}}((\G_{a}^0)^\perp\setminus \G_a^\perp).
	\end{aligned}\]
	This completes the proof.
\end{proof}
\begin{re}
	By Example \ref{all tri}, for any positive integer $t$, if there exists a $[[n,k,d_Z]]$ triorthogonal code, then there exists an $[[n+2t,k,d_Z]]$ triorthogonal code.
\end{re}
\begin{ex}
	Here is a triorthogonal matrix
	\[
	A = \left[\begin{array}{ccccccccccccccc}
		0 &0 &0 &0 &1 &1 &1 &1 &1 &1 &0 &0 &0 &0 &1 \\
		1 &0 &0 &0 &1 &1 &1 &0 &0 &0 &0 &1 &1 &1 &1 \\
		0 &1 &0 &0 &1 &0 &0 &1 &1 &0 &1 &0 &1 &1 &1 \\
		0 &0 &1 &0 &0 &1 &0 &1 &0 &1 &1 &1 &0 &1 &1 \\
		0 &0 &0 &1 &0 &0 &1 &0 &1 &1 &1 &1 &1 &0 &1 \\
	\end{array} \right]_{5\times 15},
	\]
	and the distance of the matrix $A$ is $3$. Let $B = [\;I_3\mid I_3\;]$, where $I_3$ is the identity matrix of size $3$-by-$3$. Then the distance of the matrix $G$ is also $3$, where
	\[
	G =\left[  \begin{array}{c|c}
		A &O \\
		\hline
		O &B \\
	\end{array}\right].
	\]
\end{ex}

\subsection{Largest triorthogonal subcodes from self-dual codes}
Self-dual codes are a special kind of orthogonal spaces. In this section, we will give an algorithm to find the largest dimension of triorthogonal spaces that must exist as the subspaces of a self-dual code. In fact, this dimension will partly depend on the length of the self-dual codes. First, we give some required conclusions. Here we define
\[
\C_1 \w \C_2 = \{\x\w\y : \x\in\C_1,\y\in\C_2\}
\]
for any two binary linear codes $\C_1,\C_2\subseteq \F_2^n$.

\begin{lm}
	Let $\C$ be a binary linear code. If $H$ is a linearly independent set of $\C$, and $\x\w\y\in \C_1$ for any $\x,\y\in H$, where $\C_1$ is a linear subspace. Then $\H\w\H \subseteq \C_1$, where $\H$ is the span of $H$.
\end{lm}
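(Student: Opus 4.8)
The plan is to mimic the proof of Lemma~\ref{tri basis is tri}, simply replacing the ambient code $\C$ there by the subspace $\C_1$. Recall that by the definition $\C_1\w\C_2 = \{\x\w\y : \x\in\C_1,\ \y\in\C_2\}$ introduced just above, the set $\H\w\H$ consists of all products $\x\w\y$ with $\x,\y\in\H$, so it suffices to show that $\x\w\y\in\C_1$ for every such pair.

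First I would write $H = \{\af_1,\dots,\af_t\}$ and, given $\x,\y\in\H$, express them via the spanning set as $\x = \sum_{i\in\Lambda_x}\af_i$ and $\y = \sum_{j\in\Lambda_y}\af_j$ for suitable index sets $\Lambda_x,\Lambda_y\subseteq\{1,\dots,t\}$. Next, using the distributive law $\x\w(\y+\z) = (\x\w\y)+(\x\w\z)$ of Proposition~1 (applied inductively to finite sums of more than two terms in each factor), I would expand
\[
\x\w\y \;=\; \Big(\sum_{i\in\Lambda_x}\af_i\Big)\w\Big(\sum_{j\in\Lambda_y}\af_j\Big) \;=\; \sum_{i\in\Lambda_x}\sum_{j\in\Lambda_y}(\af_i\w\af_j).
\]
Then I would argue that every summand lies in $\C_1$: for $i\neq j$ this is exactly the hypothesis $\af_i\w\af_j\in\C_1$, while for $i=j$ we have $\af_i\w\af_i = \af_i$ by the idempotency $\x\w\x=\x$ of Proposition~1, and taking $\x=\y=\af_i$ in the hypothesis shows $\af_i\in\C_1$ as well. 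Since $\C_1$ is a linear subspace, hence closed under addition, the double sum lies in $\C_1$, i.e. $\x\w\y\in\C_1$. By the arbitrariness of $\x$ and $\y$, this gives $\H\w\H\subseteq\C_1$.

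There is essentially no obstacle: the only points needing a word of care are (i) that the distributive identity of Proposition~1 extends from two summands to arbitrary finite sums, a routine induction, and (ii) that the diagonal terms $\af_i\w\af_i$ are accounted for, which is handled by idempotency together with the hypothesis applied to a repeated pair. Note also that the linear independence of $H$ is not actually used; only that $H$ spans $\H$ is needed.
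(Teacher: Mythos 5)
Your proof is correct and follows exactly the route the paper intends: the paper's own proof of this lemma simply says ``similar to the proof of Lemma~\ref{tri basis is tri}, so we omit it,'' and your argument is precisely that proof with $\C$ replaced by $\C_1$. Your extra observations (handling the diagonal terms via idempotency, and noting that linear independence of $H$ is never used) are accurate and only make the write-up more careful than the paper's.
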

\begin{proof}
	Similar to the proof of Lemma \ref{tri basis is tri}, so we omit it.
\end{proof}	

\begin{lm}
	Let $\C$ be a binary self-orthogonal code. If $H$ is a linearly independent set of $\C$ and satisfies that for any $\x,\y\in H$,
	$\x\w \y\in\rowspan\{H\}^{\perp}=\H^{\perp}$, then choose any $r$ elements in $\H$, we can get a triorthogonal matrix $G$ whose rows are these $r$ elements with $\rank(G)\leq min\{|H|,r\}$.
\end{lm}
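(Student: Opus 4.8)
The plan is to verify the two conditions of Definition~\ref{tri matrix} for $G$ directly, using only that every row of $G$ lies in $\H$, and then to dispose of the rank inequality at the very end.

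First I would record two elementary consequences of the hypotheses. Since $H$ is a linearly independent subset of $\C$ and $\C$ is linear, we have $\H = \rowspan\{H\} \subseteq \C$ and $\dim\H = |H|$. Because $\C$ is self-orthogonal, $\H \subseteq \C \subseteq \C^{\perp}$, so any two vectors $\u,\v\in\H$ satisfy $|\u\w\v|\equiv 0\pmod 2$; in particular this holds for any two rows of $G$, which is condition~(1). Next I would upgrade the hypothesis from the generating set $H$ to its span: applying the lemma immediately preceding this one with $\C_1=\H^{\perp}$ (its hypothesis ``$\x\w\y\in\H^{\perp}$ for all $\x,\y\in H$'' is exactly what we are given), I obtain $\H\w\H\subseteq\H^{\perp}$, i.e.\ $\u\w\v\in\H^{\perp}$ for all $\u,\v\in\H$.

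With these two facts in hand, condition~(2) is immediate. Take any three rows $\u,\v,\wf$ of $G$; all three lie in $\H$. Then $\u\w\v\in\H^{\perp}$ while $\wf\in\H$, and since $|\cdot\w\cdot|\bmod 2$ is the Euclidean inner product, $|\u\w\v\w\wf| = |(\u\w\v)\w\wf|\equiv 0\pmod 2$. Hence $G$ is triorthogonal. Finally, the $r$ rows of $G$ are vectors in the $|H|$-dimensional space $\H$, so $\rank(G)\le\min\{|H|,r\}$.

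I do not expect a genuine obstacle here; the argument is a short application of the preceding ``basis-to-span'' lemma combined with the self-orthogonality of $\C$. The only point requiring care is the passage from the hypothesis on $H$ to the assertion $\H\w\H\subseteq\H^{\perp}$ for the full span, which is precisely why that preceding lemma must be invoked; once that is available, the collapse $|(\u\w\v)\w\wf|\equiv 0$ follows at once from $\u\w\v\in\H^{\perp}$ and $\wf\in\H$, and the rank bound is purely dimensional.
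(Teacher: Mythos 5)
Your argument is correct and follows essentially the same route as the paper's own (very terse) proof: both reduce condition (1) to self-orthogonality of $\C$ and condition (2) to the containment $\u\w\v\in\H^{\perp}$ for all $\u,\v\in\H$. You are actually a bit more careful than the paper, which asserts $\x\w\y\in\H^{\perp}$ for arbitrary $\x,\y\in\H$ without explicitly invoking the preceding ``basis-to-span'' lemma, and which omits the (trivial) dimensional justification of $\rank(G)\le\min\{|H|,r\}$ that you supply.
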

\begin{proof}
	Since $\H\subseteq\C$, we have $|\x\w \y|\equiv0\pmod{2}$ for any $\x,\y\in\H$. For any $\x,\y,\z\in\H$, since $\x\w \y\in\H^{\perp}$, we have $|\x\w \y\w \z|\equiv0\pmod{2}$.
\end{proof}	

\begin{lm} \label{triorthogonal space generate form basis}
	If $\{\af_1,\dots,\af_s\}$ is a basis of a binary linear code $\C\subseteq \F_2^n$ and satisfies
	\[
	|\af_i\wedge\af_j\wedge\af_k| \equiv 0\pmod{2} \text{ for all } 1\leq i\leq j\leq k\leq s.
	\]
	Then for any $\x,\y,\z\in\C$, we have $|\x\wedge \y\wedge \z|\equiv0\pmod{2}$, i.e., $\C$ is a triorthogonal space. Also, if a codeword $\wf\in\F_2^n$ satisfies $|\wf\wedge\af_i\wedge \af_j| \equiv 0\pmod{2}$ for all $1\leq i\le j\leq s$, then $|\wf\wedge \y\wedge \z|\equiv0\pmod{2}$ for all $\y,\z\in\C$.
\end{lm}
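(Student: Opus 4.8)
The statement is essentially a ``multilinear spanning'' claim: the cubic (and bilinear) vanishing conditions on a basis propagate to all codewords. The plan is to expand arbitrary codewords in the basis and use multilinearity of the wedge product together with Proposition~2 (additivity of weights mod~2 under $+$) to reduce everything to the basis triples.

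\medskip
\noindent\textbf{Plan of proof.}
First I would write $\x=\sum_{i\in\Lambda_x}\af_i$, $\y=\sum_{j\in\Lambda_y}\af_j$, $\z=\sum_{k\in\Lambda_z}\af_k$ for index sets $\Lambda_x,\Lambda_y,\Lambda_z\subseteq\{1,\dots,s\}$, using that $\{\af_1,\dots,\af_s\}$ is a basis. Then, by Proposition~1 (distributivity of $\w$ over $+$) applied repeatedly,
\[
\x\w\y\w\z = \sum_{i\in\Lambda_x}\sum_{j\in\Lambda_y}\sum_{k\in\Lambda_z} \af_i\w\af_j\w\af_k .
\]
Taking weights and using Proposition~2 (which gives $|\u\w(\v_1+\v_2+\cdots)|\equiv\sum|\u\w\v_\ell|\pmod 2$, and hence, iterated, the analogous statement for a sum inside any one slot of a triple wedge), we obtain
\[
|\x\w\y\w\z| \equiv \sum_{i\in\Lambda_x}\sum_{j\in\Lambda_y}\sum_{k\in\Lambda_z} |\af_i\w\af_j\w\af_k| \pmod 2 .
\]
Now each summand $|\af_i\w\af_j\w\af_k|$ is even: if the indices $i,j,k$ are pairwise distinct this is one of the hypothesised basis triples (after sorting, which does not change the wedge since $\w$ is symmetric); if two indices coincide, say $i=j$, then $\af_i\w\af_i\w\af_k=\af_i\w\af_k$ by $\af_i\w\af_i=\af_i$ (Proposition~1), and $|\af_i\w\af_k|$ is covered by the case $i\le k$ (or is $|\af_i|$ when all three coincide, and $|\af_i|=|\af_i\w\af_i\w\af_i|$ is even by hypothesis with $i=j=k$). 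In every case the term is $\equiv 0\pmod 2$, so the whole sum is, which proves $\C$ is a triorthogonal space.

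\medskip
For the second assertion, given $\wf$ with $|\wf\w\af_i\w\af_j|\equiv 0\pmod 2$ for all $1\le i\le j\le s$, I would run the identical argument with $\wf$ fixed in the first slot: expand $\y=\sum_{j\in\Lambda_y}\af_j$ and $\z=\sum_{k\in\Lambda_z}\af_k$, distribute to get $\wf\w\y\w\z=\sum_{j,k}\wf\w\af_j\w\af_k$, take weights mod~2 via Proposition~2, and observe each $|\wf\w\af_j\w\af_k|$ is even (using $\af_j\w\af_j=\af_j$ to absorb the coincident-index case and the hypothesis for $j\le k$). Hence $|\wf\w\y\w\z|\equiv 0\pmod 2$ for all $\y,\z\in\C$.

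\medskip
\noindent\textbf{Main obstacle.} There is no serious obstacle; the only thing requiring a little care is the bookkeeping for repeated indices in the triple sum --- ensuring that terms like $\af_i\w\af_i\w\af_k$ are correctly reduced via idempotency ($\af\w\af=\af$) to a bilinear term already controlled by the $i\le j\le k$ hypothesis, rather than being (incorrectly) dismissed. I would state this reduction explicitly once and then invoke it. The mod-2 interchange of summation is routine given Proposition~2, but I would note that it is applied one slot at a time (first expand the first argument, then the second, then the third) so that only the single-slot additivity of Proposition~2 is ever needed.
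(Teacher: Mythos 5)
Your proof is correct and follows essentially the same route as the paper: expand $\x,\y,\z$ (respectively $\wf,\y,\z$) in the basis, distribute $\w$ over the sums, pass to weights mod~2 by additivity, and observe every summand vanishes by hypothesis. The only difference is cosmetic: the paper dispatches repeated-index terms silently (since the hypothesis already ranges over $i\le j\le k$, including equalities, sorting any triple makes it a hypothesised one directly), whereas you detour through idempotency $\af\w\af=\af$ before circling back to the same conclusion --- harmless, just a touch redundant.
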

\begin{proof}
	Let $\x,\y,\z\in\C$ and $\x = \sum_{i\in\Lambda_x}\a_i,\;\y = \sum_{j\in\Lambda_y}\a_j,\;\z = \sum_{k\in\Lambda_z}\a_k$, for some index sets $\Lambda_x, \Lambda_y$ and $\Lambda_z$. Then we have
	\[
	\begin{aligned}
		|\x\w \y\w \z|
		&= \left| (\sum_{i\in\Lambda_x}\a_i)\w(\sum_{j\in\Lambda_y}\a_j)\w(\sum_{k\in\Lambda_z}\a_k)\right|  \\
		&\equiv \sum_{i\in\Lambda_x}\sum_{j\in\Lambda_y}\sum_{k\in\Lambda_z}|\a_i\w \a_j\w \a_k| \pmod{2}\\
		&\equiv 0 \pmod{2},\text{ and}\\
		|\wf\w \y\w \z|
		&= \left| \wf\w(\sum_{j\in\Lambda_y}\a_j)\w(\sum_{k\in\Lambda_z}\a_k)\right| \\
		&\equiv \sum_{j\in\Lambda_y}\sum_{k\in\Lambda_z}|\wf\w \a_j\w \a_k| \pmod{2}\\
		&\equiv 0 \pmod{2}.
	\end{aligned}
	\]
	The result follows.
\end{proof}

It is clear that the subspaces of triorthogonal spaces are also triorthogonal, so we just need to find the largest triorthogonal spaces. Now we give a search algorithm to find the maximum triorthogonal space (i.e., the highest dimension) of a certain self-dual code.

\begin{al} \label{algo}
	Find the largest triorthogonal space of a binary self-dual code.
	\begin{center}
		\begin{tabularx}{\textwidth}{|lX|}
			\hline
			\multirow{2}{*}{\textbf{Input:}} &(1) A binary self-dual code $\C$ of parameters $[n,k]$ $(n=2k)$. \\
			&(2) Two non-zero and unequal codewords $\y,\z\in\C$ ($\y,\z$ are starting codewords). \\
			\hline
			\textbf{Step 1:} &Let $H = \{\y,\z\}$ and $\H = \rowspan\{H\}$. \\
			\textbf{Step 2:} &Solve the following system
			\begin{equation}\label{linear system 1}
				\begin{cases}
					|\x\wedge \af \wedge \be| \equiv 0 \pmod{2}, \text{ for any }\af,\be\in H\text{ and } \af\neq \be \\
					\x \in \C\\ 
					\x \notin \H
				\end{cases}.\end{equation}
			If there is no solution, then terminate this algorithm. Else, choose a solution $\x$ and let $\wf=\x$.\\
			\textbf{Step 3:} & Add $\wf$ to the set $H$. Update $\H = \rowspan\{H\}$ and turn to Step 2.\\
			\hline
			\textbf{Output:} & A linearly independent set $H\subseteq\C$ (We call $H$ the output set). \\
			\hline
		\end{tabularx}
	\end{center}
\end{al}

\begin{thm} \label{output thm}
	Let the output of Algorithm \ref{algo} be $H$, then $\H = \rowspan\{H\}$ is a triorthogonal space. Therefore, by choosing any $r$ elements in $H$, we can get a full-rank triorthogonal matrix of size $r$-by-$n$, which is also the generator matrix of a triorthogonal space.
\end{thm}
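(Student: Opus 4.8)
The plan is to reduce the statement to the basis criterion of Lemma~\ref{triorthogonal space generate form basis}. Write the output set as $H=\{\af_1,\dots,\af_s\}$, ordered so that $\af_1=\y$ and $\af_2=\z$ are the two starting codewords and $\af_3,\af_4,\dots$ are listed in the order Algorithm~\ref{algo} adjoins them in Step~3. First I would record two easy facts. (i) $H$ is linearly independent: $\{\y,\z\}$ is independent over $\F_2$ because $\y,\z$ are nonzero and distinct, and every subsequently adjoined $\wf$ is required by \eqref{linear system 1} to satisfy $\wf\notin\H$, so it is independent of the current set. (ii) Since $\C$ is self-dual, it is self-orthogonal, hence $|\c\w\c'|\equiv 0\pmod 2$ for all $\c,\c'\in\C$; in particular every element of $H$ has even weight.

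Next I would check the hypothesis of Lemma~\ref{triorthogonal space generate form basis}, i.e. $|\af_i\w\af_j\w\af_k|\equiv 0\pmod 2$ for all $1\le i\le j\le k\le s$, splitting by how many of the three indices coincide. If two of them are equal, the idempotency $\af\w\af=\af$ collapses the triple product to a pairwise product $|\af_a\w\af_b|$, which vanishes mod $2$ by fact~(ii); this disposes of the cases $i=j$, $j=k$ and $i=j=k$ (and covers the degenerate situation $s=2$, where no other case arises). For three distinct indices $i<j<k$, the point is to look at the moment $\af_k$ entered $H$: at that step the current set was exactly $\{\af_1,\dots,\af_{k-1}\}$, and Step~2 guaranteed that the chosen solution $\wf=\af_k$ satisfies $|\af_k\w\af\w\be|\equiv 0\pmod 2$ for every pair of distinct $\af,\be$ in that set --- in particular for $\af=\af_i$, $\be=\af_j$, since $i,j\le k-1$ and $i\neq j$. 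Thus $|\af_i\w\af_j\w\af_k|\equiv 0\pmod 2$ in every case, and Lemma~\ref{triorthogonal space generate form basis} yields that $\H=\rowspan\{H\}$ is a triorthogonal space. (The algorithm terminates, and $H$ is well defined, because each execution of Step~3 strictly enlarges $\H\subseteq\C$.)

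For the final assertion: given any $r$ elements of $H$, they are linearly independent by (i), so the matrix whose rows are these elements has rank $r$ and size $r$-by-$n$; its row span is a subspace of $\H$, hence triorthogonal, and each of its rows lies in $\C$ and so has even weight, making it a generator matrix of a triorthogonal space. The only genuine subtlety I anticipate is the bookkeeping in the three-distinct-index case: one must observe that the triorthogonality condition on a triple of indices is certified precisely when the largest-indexed element joins $H$, and that the algorithm's check over distinct pairs $\af,\be$ in the current set is exactly what supplies it. The degenerate-index cases, though they appear to need separate treatment, fall out immediately from idempotency of $\w$ together with self-orthogonality of $\C$.
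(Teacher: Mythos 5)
Your proof is correct and follows the same route as the paper's: both reduce the claim to Lemma~\ref{triorthogonal space generate form basis} and observe that the constraints in Step~2 of Algorithm~\ref{algo}, combined with the self-orthogonality of $\C$, supply exactly the triple condition $|\af_i\w\af_j\w\af_k|\equiv 0\pmod 2$ needed on the output basis. The paper's proof is a single terse sentence for the key step, whereas you explicitly separate the degenerate-index cases (handled by idempotency of $\w$ and self-orthogonality) from the three-distinct-index case (certified at the moment the largest-indexed element was adjoined), and you also justify linear independence, termination, and the full-rank claim in the final assertion --- all of which the paper glosses over but which are genuinely part of a complete argument.
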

\begin{proof}
	By Algorithm \ref{algo}, since $\C$ is self-dual (i.e., $\C=\C^\perp$), $H$ satisfies $|\af\w\be\w\rf| \equiv 0\pmod{2}$ for any $\af,\be,\rf\in H$. By Lemma \ref{triorthogonal space generate form basis}, $\H$ is triorthogonal. Since each subspace of a triorthogonal space is also triorthogonal, we obtain the rest of this theorem.
\end{proof}

Now we can gain a lot of triorthogonal spaces with parameters $[n,r]$ $(r\leq |H|)$ which are the subspaces of $\H$.
Now we give some bounds of $|H|$. The following conclusions follow the notations in Algorithm \ref{algo}.

\begin{thm}\label{bound}
	Let the output of Algorithm \ref{algo} be $H$, then
	\[
	\left\lceil \frac{\sqrt{8k+1}-1}{2} \right\rceil\leq |H|\leq k .
	\]
	Also $|H| = k$ if and only if $\C$ is a triorthogonal space. Furthermore, Algorithm \ref{algo} will continue to proceed until the conditions for terminating the algorithm are met, i.e., \eqref{linear system 1} has no solution.
\end{thm}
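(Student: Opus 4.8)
The plan is to treat the three assertions in order of increasing difficulty, isolating the lower bound as the only part that needs an idea.

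\textbf{Upper bound and termination.} These are immediate bookkeeping. By construction $H$ remains a linearly independent subset of $\C$, and $\dim\C=k$, so $|H|\le k$. Since Step~3 enlarges $H$ by exactly one vector each time it is reached, the loop runs for at most $k-2$ iterations; inspecting the pseudocode, the sole exit is the ``no solution'' branch of Step~2. This is exactly the ``Furthermore'' clause.

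\textbf{The equality case.} If $\C$ is a triorthogonal space, then $|\x\w\af\w\be|\equiv 0\pmod 2$ holds for \emph{every} $\x\in\C$ and \emph{every} $\af,\be\in\H\subseteq\C$ by the definition of a triorthogonal space, so the only obstruction to solving \eqref{linear system 1} is the constraint $\x\notin\H$; hence whenever $\H\neq\C$ a solution exists and the algorithm continues, forcing $|H|=k$ at termination. Conversely, if $|H|=k$ then $\H=\rowspan\{H\}=\C$, and Theorem~\ref{output thm} tells us $\H$ is triorthogonal, so $\C$ is a triorthogonal space.

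\textbf{The lower bound (main step).} Write $s:=|H|$ for the size of the output set. The key reformulation is that for fixed $\af,\be$ the map $\x\mapsto |\x\w\af\w\be|\bmod 2$ is the $\F_2$-linear functional $\x\mapsto\langle \x,\af\w\be\rangle$. Consequently the set of $\x\in\C$ satisfying all the congruences of \eqref{linear system 1} is
\[
V=\C\cap W^{\perp},\qquad W:=\rowspan\{\af\w\be:\af,\be\in H,\ \af\neq\be\}.
\]
Because $\H$ is a triorthogonal space (Theorem~\ref{output thm}), expanding $\x\in\H$ in the basis $H$ shows every element of $\H$ annihilates all these functionals, so $\H\subseteq V$; and termination means \eqref{linear system 1} has no solution, i.e.\ $V\subseteq\H$. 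Hence $V=\H$ and $\dim V=s$. On the other hand there are only $\binom{s}{2}$ unordered pairs from $H$, so $\dim W\le\binom{s}{2}$, whence $\dim V=\dim(\C\cap W^{\perp})\ge\dim\C-\dim W\ge k-\binom{s}{2}$. Combining the two evaluations of $\dim V$ gives $s\ge k-\binom{s}{2}$, i.e.\ $s^2+s-2k\ge 0$; solving this quadratic and using that $s\in\Z$ yields $s\ge\left\lceil\frac{\sqrt{8k+1}-1}{2}\right\rceil$.

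\textbf{Where the difficulty lies.} The only genuinely delicate point is establishing $V=\H$ at termination: one must first check $\H\subseteq V$ (this is precisely where the triorthogonality of $\H$ from Theorem~\ref{output thm} enters) so that infeasibility of \eqref{linear system 1}, which a priori only says $V\cap(\F_2^n\setminus\H)=\varnothing$, upgrades to the equality $V=\H$ and hence to $\dim V=s$. After that, the bound $\dim W\le\binom{s}{2}$ and the estimate $\dim(\C\cap W^{\perp})\ge\dim\C-\dim W$ are standard linear algebra over $\F_2$, and the closing inequality is just the quadratic formula together with integrality.
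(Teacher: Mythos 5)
Your proof is correct and follows essentially the same route as the paper's. The paper phrases the dimension count as a homogeneous linear system in $n=2k$ unknowns with $k+\binom{|H|}{2}$ equations (the $k$ equations encoding $\x\in\C=\C^\perp$ plus one equation per unordered pair from $H$), arriving at $\dim(\X)\geq k-\binom{|H|}{2}$; your reformulation $V=\C\cap W^\perp$ with $W=\rowspan\{\af\w\be\}$ packages exactly the same estimate a bit more cleanly, and your observation that $\H\subseteq V$ (hence $V=\H$ at termination) is also present in the paper's remark that every element of $\H$ solves the system. For the ``Furthermore'' clause the paper argues via the strict decrease of $f(|H|)=\dim(\X)-\dim(\H)$, whereas you invoke the simpler cap $|H|\leq k$ together with the fact that $|H|$ grows by one each iteration; both are valid, and yours is arguably more direct. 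The equality case is handled identically in both.
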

\begin{proof}
	Note that
	\begin{equation}\label{linear system 2}
		\begin{cases}
			|\x\wedge \af \wedge \be| \equiv 0 \pmod{2}, \text{ for any }\af,\be\in H\text{ and } \af\neq \be, \\
			\x \in \C. 
		\end{cases}.\end{equation}
	Since $\C$ is a binary self-dual code, we have
	\[
	\x\in\C  \Leftrightarrow |\x\w \bm{\gamma}_i|\equiv 0\pmod{2} \text{ for all } 1\leq i\leq k,
	\]
	where $\{\bm{\gamma}_1,\cdots,\bm{\gamma}_k\}$ is a basis of $\C$. Therefore, \eqref{linear system 2} is actually a homogeneous system of linear equations which has $n$ unknowns and $k+{|H|\choose 2}$ different equations over $\F_2$. From the knowledge of linear algebra, the dimension of the solution space $\X$ of the system \eqref{linear system 2} satisfies $\dim(\X)\geq n-(k+{|H|\choose 2})$, as these $k+{|H|\choose 2}$ different linear equations may have several linearly dependent equations. Since $\C$ is a self-dual code, we have $n = 2k$ and \[\dim(\X)\geq k-{|H|\choose 2}.\]
	Furthermore, $|\H| = 2^{|H|}$ and each codeword in $\H$ satisfies the system \eqref{linear system 2}, so if the system \eqref{linear system 1} has a solution, then we need $\dim(\X)> \dim(\H)$. However, since $H$ is the output, which means that \eqref{linear system 1} has no solution, then we have
	$\dim(\X) \leq \dim(\H)$. Therefore, $k \leq |H| + {|H|\choose 2} = {|H|+1\choose 2}$ and \[ |H|\geq\left\lceil \frac{\sqrt{8k+1}-1}{2} \right\rceil. \]
	
	What's more, if $\C$ is a triorthogonal space, then $|H| = k$. Conversely, if $|H|=k$, since $H\subseteq \C$, then $\H = \C$ and $\C$ is triorthogonal space by Theorem \ref{output thm}.
	
	From our discussion above, to prove that the algorithm can continue until the termination condition is reached, we only need to prove that $f(|H|) = \dim(\X) - \dim(\H)$ is a strictly monotonically decreasing function, where $\H=\rowspan\{H\}$ and $H$ is the set before the algorithm ends, not the output. During the process of the algorithm, when $|H|$ keeps increasing, since the number of unknowns in \eqref{linear system 2} remains unchanged (always $n$), \red{and the number of equations increases gradually. These facts illustrate that $\dim(\X)$ will become smaller or remain the same, while $\dim(\H) = |H|$ will become larger. This fact shows that $f$ is a strictly monotonically decreasing function.} The result follows.
\end{proof}

\begin{thm} \label{has solutions 1}
	\red{During the process of the algorithm, if} $k > {|H|+1\choose 2}$, then \eqref{linear system 1} always has solutions.
\end{thm}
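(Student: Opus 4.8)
The plan is to recast the system \eqref{linear system 1} as the assertion that the solution space $\X$ of the homogeneous system \eqref{linear system 2} is strictly larger than $\H=\rowspan\{H\}$, and then to force this inequality from the dimension count already carried out in the proof of Theorem \ref{bound} together with the hypothesis $k>{|H|+1\choose 2}$.

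First I would record the inclusion $\H\subseteq\X$, i.e. that every codeword of the current span $\H$ satisfies all the equations of \eqref{linear system 2}. By construction the set $H$ at any stage of Algorithm \ref{algo} is linearly independent and satisfies $|\af\w\be\w\rf|\equiv 0\pmod 2$ for every triple of \emph{distinct} $\af,\be,\rf\in H$: the starting set $\{\y,\z\}$ has no such triple, and each newly adjoined $\wf$ is chosen precisely so that $|\wf\w\af\w\be|\equiv 0$ for all distinct $\af,\be$ already in $H$. Since $\C$ is self-dual, every codeword has even weight, so $|\af|\equiv 0$; since $\C$ is self-orthogonal, $|\af\w\be|\equiv 0$. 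Hence $|\af_i\w\af_j\w\af_k|\equiv 0\pmod 2$ for all indices $i\le j\le k$ on the basis $H$ of $\H$, the degenerate cases with repeated indices being handled by the two observations just made. Lemma \ref{triorthogonal space generate form basis} then gives $|\x\w\af\w\be|\equiv 0\pmod 2$ for all $\x\in\H$ and all $\af,\be\in H$; together with $\H\subseteq\C$ this shows $\H\subseteq\X$.

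Next I would invoke the dimension estimate from the proof of Theorem \ref{bound}: \eqref{linear system 2} is a homogeneous linear system over $\F_2$ in $n=2k$ unknowns with $k+{|H|\choose 2}$ equations (the $k$ conditions $|\x\w\bm{\gamma}_i|\equiv 0$ encoding $\x\in\C$, plus one for each unordered pair of distinct elements of $H$), so $\dim(\X)\ge 2k-\bigl(k+{|H|\choose 2}\bigr)=k-{|H|\choose 2}$. Finally, combining the pieces: the hypothesis gives $k>{|H|+1\choose 2}={|H|\choose 2}+|H|$, whence $\dim(\X)\ge k-{|H|\choose 2}>|H|=\dim(\H)$. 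Since $\H\subseteq\X$ and $\dim(\X)>\dim(\H)$, there is some $\x\in\X\setminus\H$; such an $\x$ lies in $\C$, satisfies $|\x\w\af\w\be|\equiv 0$ for all distinct $\af,\be\in H$, and is not in $\H$, so it solves \eqref{linear system 1}. The only step that needs any care is the verification $\H\subseteq\X$; everything else is the arithmetic $k-{|H|\choose 2}>|H|$ and elementary linear algebra over $\F_2$.
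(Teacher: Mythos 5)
Your proof is correct and takes essentially the same route as the paper, which simply cites the bound $\dim(\X)\geq k-{|H|\choose 2}$ established in Theorem \ref{bound} and concludes $\dim(\X)>|H|=\dim(\H)$ from $k>{|H|+1\choose 2}$. Your extra verification that $\H\subseteq\X$ (via Lemma \ref{triorthogonal space generate form basis} and self-orthogonality) is a nice elaboration of a fact the paper asserts without proof inside Theorem \ref{bound}, though strictly speaking $\dim(\X)>\dim(\H)$ already forces $\X\not\subseteq\H$ and hence a solution, without needing the containment.
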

\begin{proof}
	Following the notations used in Theorem \ref{bound} and their meanings, if $k > {|H|+1\choose 2}$, then
	\[
	\dim(\X)\geq k-{|H|\choose 2} > |H| = \dim(\H),
	\]
	and \eqref{linear system 1} has solutions.
\end{proof}

We denote $\one_n$ as the all-one vector of length $n$. Since for each codeword $\x$ in a self-dual code $\C\subseteq \F_2^n$, $|\x\w\one_n| = |\x| \equiv 0 \pmod{2}$, so $\one_n\in\C^\perp = \C$. We choose $\one_n$ as one of the starting codewords. Then \eqref{linear system 2} actually has $k + {|H|-1\choose 2}$ different linear equations where $H$ is the set during the execution of the algorithm and $\H = \rowspan\{H\}$, as for any $\af\in H$, $|\x\w\one_n\w\af| = |\x\w\af|$, and $\x\in\C$, so $|\x\w\af|\equiv 0\pmod{2}$. Therefore, \eqref{linear system 2} is actually
\[
\begin{cases}
	|\x\wedge \af \wedge \be| \equiv 0 \pmod{2}, \text{ for any }\af,\be\in H\setminus\{\one_n\}\text{ and } \af\neq \be,  \\
	\x \in \C. 
\end{cases}
\]
If $\dim(\X)\leq \dim(\H)$, i.e., \eqref{linear system 1} has no solution, then
\[\begin{aligned}
	&n-(k + {|H|-1\choose 2}) = k - {|H|-1\choose 2} \leq \dim(\X)\leq \dim(\H) = |H| \\
	&\Leftrightarrow         |H| +  {|H|-1\choose 2} = |H| + {|H|\choose 2} - (|H|-1) = {|H|\choose 2}+1 \geq k \\
	&\Leftrightarrow   |H| \geq \frac{\sqrt{8k-7}+1}{2}.
\end{aligned}\]

\begin{thm}\label{impro bound}
	Let the output of Algorithm \ref{algo} be $H$. Then
	\[
	|H| \geq \left\lceil \frac{\sqrt{8k-7}+1}{2} \right\rceil
	\]
	if $\one_n$ is one of the starting codewords and $\H=\rowspan\{H\}$ is a unital triorthogonal space.
\end{thm}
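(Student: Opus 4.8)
The plan is to re-run the argument behind Theorem \ref{bound}, but to exploit that once $\one_n$ sits among the generators it contributes no genuinely new triorthogonality constraint: the homogeneous system that controls the algorithm then has one fewer equation at every stage, and the termination test forces $|H|$ to be larger. So the whole proof is the same dimension-counting bookkeeping as for Theorem \ref{bound}, run with $\binom{|H|-1}{2}$ pair-constraints instead of $\binom{|H|}{2}$.

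First I would record that $\one_n$ is an admissible starting codeword. Since $\C$ is self-dual we have $\C\subseteq\C^\perp$, so every $\x\in\C$ satisfies $|\x|=|\x\w\x|\equiv0\pmod2$; hence $|\one_n\w\x|=|\x|\equiv0\pmod2$ for all $\x\in\C$, i.e.\ $\one_n\in\C^\perp=\C$. As $\one_n\neq\zero$, Algorithm \ref{algo} may be started from $\one_n$ together with any codeword distinct from $\one_n$, and then $\one_n\in H$ at every stage of the execution, in particular in the output set $H$.

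Next I would re-count the equations in system \eqref{linear system 2} for the current set $H$ containing $\one_n$. For each $\be\in H\setminus\{\one_n\}$ the constraint $\modeq{|\x\w\one_n\w\be|}{0}$ is the same as $\modeq{|\x\w\be|}{0}$, because $\x\w\one_n=\x$, and the latter is already forced by $\x\in\C$ since $\be\in\C$ and $\C$ is self-orthogonal. Thus the $|H|-1$ pair-constraints passing through $\one_n$ are redundant, and \eqref{linear system 2} is equivalent to the system ranging only over pairs $\af,\be\in H\setminus\{\one_n\}$ with $\af\neq\be$, together with $\x\in\C$. Writing $\x\in\C$ as orthogonality against a basis of $\C$, this is a homogeneous system of $k+\binom{|H|-1}{2}$ linear equations in $n=2k$ unknowns over $\F_2$, so its solution space $\X$ satisfies
\[
\dim(\X)\ \geq\ 2k-\Bigl(k+\binom{|H|-1}{2}\Bigr)\ =\ k-\binom{|H|-1}{2}.
\]
I would then close exactly as in Theorem \ref{bound}: when $H$ is the output, \eqref{linear system 1} has no solution, so every solution of \eqref{linear system 2} already lies in $\H=\rowspan\{H\}$, whence $\dim(\X)\leq\dim(\H)=|H|$ by linear independence of $H$. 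Combining the two inequalities gives $k-\binom{|H|-1}{2}\leq|H|$, i.e.\ $k\leq|H|+\binom{|H|-1}{2}=\binom{|H|}{2}+1$, i.e.\ $|H|^2-|H|-2(k-1)\geq0$, which yields $|H|\geq\frac{1+\sqrt{8k-7}}{2}$ and hence $|H|\geq\bigl\lceil\frac{\sqrt{8k-7}+1}{2}\bigr\rceil$ since $|H|\in\Z$. Finally $\H$ is triorthogonal by Theorem \ref{output thm} and contains $\one_n$, so it is automatically unital, which matches (and in fact does not restrict) the hypothesis on $\H$.

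The only step demanding care is the equation count in the third paragraph: one must argue cleanly that precisely the $|H|-1$ constraints through $\one_n$ become vacuous and that nothing else is double-counted. This is the same bookkeeping as in the proof of Theorem \ref{bound}, merely introducing one fewer constraint at each iteration, so I expect no real obstacle; the concluding step is just the quadratic formula, already displayed in the paragraph preceding the statement.
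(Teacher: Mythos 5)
Your proof is correct and follows the same route as the paper: show $\one_n\in\C$, observe that $\x\w\one_n=\x$ makes every pair-constraint through $\one_n$ a consequence of $\x\in\C$, so the system has only $k+\binom{|H|-1}{2}$ equations, then combine $\dim(\X)\geq k-\binom{|H|-1}{2}$ with $\dim(\X)\leq|H|$ at termination and solve the resulting quadratic. Your added remark that the unital hypothesis is automatic (since $\one_n\in H\subseteq\H$) is a small but accurate observation not spelled out in the paper.
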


\begin{thm}\label{has solutions 2}
	Let $\one_n$ be one of the starting codewords. During the process of the algorithm, if $k > {|H|\choose 2}+1$, then \eqref{linear system 1} always has solutions.
\end{thm}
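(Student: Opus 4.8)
The plan is to reuse almost verbatim the computation carried out just before Theorem~\ref{impro bound}, which already established the relevant inequality in one direction. Recall that with $\one_n$ chosen as a starting codeword, the linear system \eqref{linear system 2} reduces to $k + \binom{|H|-1}{2}$ distinct linear equations in $n = 2k$ unknowns, because every condition $|\x\w\one_n\w\af| \equiv 0 \pmod 2$ is automatically satisfied for $\x \in \C$ and $\af \in H$. Hence $\dim(\X) \geq n - \bigl(k + \binom{|H|-1}{2}\bigr) = k - \binom{|H|-1}{2}$.

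First I would observe the algebraic identity $\binom{|H|-1}{2} = \binom{|H|}{2} - (|H|-1)$, so that $|H| + \binom{|H|-1}{2} = \binom{|H|}{2} + 1$. Therefore the hypothesis $k > \binom{|H|}{2} + 1$ is exactly the statement $k > |H| + \binom{|H|-1}{2}$, which rearranges to $k - \binom{|H|-1}{2} > |H|$. Combining this with the dimension bound above gives
\[
\dim(\X) \;\geq\; k - \binom{|H|-1}{2} \;>\; |H| \;=\; \dim(\H).
\]

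Second I would appeal to the criterion already isolated in the proof of Theorem~\ref{bound}: every codeword of $\H$ satisfies \eqref{linear system 2}, so $\H \subseteq \X$; if moreover $\dim(\X) > \dim(\H)$, then $\X$ must contain a vector $\x \in \C$ with $\x \notin \H$, and such an $\x$ is precisely a solution of \eqref{linear system 1} (the triple-wedge conditions in \eqref{linear system 1} coincide with those in \eqref{linear system 2}, the all-one codeword contributing no new constraint). Thus \eqref{linear system 1} has a solution, which is the claim.

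There is essentially no obstacle here: the theorem is the exact analogue of Theorem~\ref{has solutions 1} under the improved equation count that the choice $\af = \one_n$ makes available, and all the heavy lifting — the reduction of the wedge conditions to a homogeneous linear system, the count of equations, and the containment $\H \subseteq \X$ — has been done in the preceding discussion. The only point worth stating carefully is the binomial identity $\binom{|H|-1}{2} = \binom{|H|}{2} - (|H|-1)$ that translates the hypothesis into the needed strict inequality on dimensions; everything else is a direct quotation of the earlier argument.
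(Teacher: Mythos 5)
Your proof is correct and follows the same route as the paper's (implicit) argument: the paper establishes the reduced equation count $k + \binom{|H|-1}{2}$ and the chain of inequalities just before Theorem~\ref{impro bound}, leaving Theorem~\ref{has solutions 2} unproved as the immediate contrapositive, and you simply make that contrapositive explicit via the binomial identity $\binom{|H|-1}{2} = \binom{|H|}{2} - (|H|-1)$ and the containment $\H \subseteq \X$. Nothing further is needed.
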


\begin{ex}\label{k=4 self-dual}
	Consider a self-dual code $\C_2$ with generator matrix
	\[
	G_2 = \left[ \begin{array}{cccccccccc}
		1&0&0&0&1&0&1&0&0&1 \\
		0&1&0&0&1&0&1&0&1&0 \\
		0&0&1&0&0&0&1&0&1&1 \\
		0&0&0&1&1&0&0&0&1&1 \\
		0&0&0&0&0&1&0&1&0&0 \\
	\end{array}\right]. \]
	By using Algorithm \ref{algo}, and choosing $\one_{10},(1,0,0,0,1,0,1,0,0,1)$ as the starting codewords, we can get the output set
	\[\begin{aligned}
		H = \{ &(1,0,0,0,1,0,1,0,0,1),(0,1,0,0,1,0,1,0,1,0),\\
			   &(0,0,0,0,0,1,0,1,0,0),(1,1,1,1,1,1,1,1,1,1) \}.
	\end{aligned}\]
\end{ex}

\red{We focus on such a problem, i.e., the existence of $[2k,r]$ triorthogonal spaces which are the subspaces of a certain self-dual code $\C$ of parameters $[2k,k]$.} During the operation of the algorithm, when $|H| = r-1$, if \eqref{linear system 1} has solutions, then the number of the elements in the output set is greater than or equal to $r$, which also means that there exists a triorthogonal space of dimension $r$. Therefore, combining Theorem \ref{has solutions 1} and Theorem \ref{has solutions 2}, we have the following theorem.

\begin{thm} \label{conclusion of require dimension}
	Follow the notations and their meanings in Algorithm \ref{algo}. For a positive integer $r$ $(r\geq 3)$, if $k> {r\choose 2}$, then there exists a $[n,r]$ triorthogonal space, which is the subspace of the self-dual code $\C$. Moreover, if $\one_n$ is one of starting codewords and $k >{r-1\choose 2}+1$, we have the same conclusion.
\end{thm}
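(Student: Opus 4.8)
The plan is to feed the self-dual code $\C$ into Algorithm \ref{algo} and to show that, under the stated hypothesis on $k$, the algorithm cannot terminate before the working set $H$ reaches size $r$. Once this is established, Theorem \ref{output thm} gives us exactly what we want: choosing any $r$ of the (linearly independent) vectors in $H$ yields a full-rank $r$-by-$n$ triorthogonal matrix whose row span is an $[n,r]$ triorthogonal space, and since $H\subseteq\C$ this space is a subspace of $\C$.

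For the first assertion I would track the cardinality of $H$ through the execution of the algorithm. The algorithm starts with $|H|=2$, and the hypothesis $r\geq 3$ gives $2\leq r-1$. Suppose that at some stage the current set has $|H|=j$ with $2\leq j\leq r-1$. Then ${j+1\choose 2}\leq{r\choose 2}<k$, using monotonicity of $t\mapsto{t+1\choose 2}$ and the assumption $k>{r\choose 2}$, so Theorem \ref{has solutions 1} guarantees that \eqref{linear system 1} still has a solution; hence Step 2 does not terminate and Step 3 enlarges $H$ by exactly one element. By the monotonicity statement at the end of Theorem \ref{bound}, the algorithm proceeds in this way through every value $j=2,3,\dots$ until the termination condition is met, and by the above it cannot stop while $|H|\leq r-1$. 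Therefore the output set satisfies $|H|\geq r$, which finishes the first claim.

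For the ``moreover'' part I would run the same argument but take $\one_n$ as one of the two starting codewords — this is legitimate since $\one_n\in\C^{\perp}=\C$ — and invoke Theorem \ref{has solutions 2} in place of Theorem \ref{has solutions 1}: while $|H|=j\leq r-1$ we get ${j\choose 2}+1\leq{r-1\choose 2}+1<k$, so \eqref{linear system 1} has a solution and the algorithm continues, again forcing $|H|\geq r$ at termination (and the span is unital, since it contains $\one_n$). The only delicate point in all of this is the bookkeeping on $|H|$: one must verify that at every stage before the target dimension is reached the strict inequality required by Theorem \ref{has solutions 1} (respectively Theorem \ref{has solutions 2}) genuinely holds — this is precisely where the bounds $k>{r\choose 2}$ and $k>{r-1\choose 2}+1$ are spent — and that the monotonicity conclusion of Theorem \ref{bound} really does force $|H|$ to grow one step at a time, so that it passes through the value $r$ rather than stalling earlier. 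Both checks are routine, but both should be spelled out explicitly.
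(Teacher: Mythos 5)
Your proof is correct and follows essentially the same route as the paper: the paper motivates the theorem by observing that once $|H|=r-1$ holds, Theorems \ref{has solutions 1} and \ref{has solutions 2} guarantee \eqref{linear system 1} has a solution, so the output has size at least $r$, and Theorem \ref{output thm} then yields the desired $[n,r]$ triorthogonal subspace. You have merely made explicit the inductive bookkeeping showing that the algorithm, which adds exactly one vector per iteration starting from $|H|=2$, cannot stall at any $j\le r-1$ under the hypotheses on $k$ — a detail the paper leaves implicit — and this fills the argument out correctly.
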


We put some numerical results in Table \ref{min k table}.
\begin{table}[htb]
	\caption{For some fixed $r$, the minimal value of $k$ in Theorem \ref{conclusion of require dimension}, which satisfies that there exists a $[2k,r]$ $(r\leq k)$ triorthogonal subspace which is the subspace of \red{a self-dual $[2k,k]$ code} $\C$.}
	\label{min k table}
	\begin{center}
		\begin{tabular}{m{3.5cm}<{\centering}|m{3.5cm}<{\centering}|m{3.5cm}<{\centering}}
			\hline
			
			\hline
			\# & General case & $\one_{2k}$ is a starting codeword \\
			\hline
			$r$        &$k = {r\choose 2}+1$       & $k = {r-1\choose 2} + 2$  \\
			\hline
			3  & 4  & 3 \\
			4  & 7  & 5 \\
			5  & 11  & 8 \\
			6  & 16  & 12 \\
			7  & 22  & 17 \\
			8  & 29  & 23 \\
			9  & 37  & 30 \\
			10  & 46  & 38 \\
			\hline
			
			\hline
		\end{tabular}
	\end{center}
\end{table}

\section{Applications and examples}

%
%

\rR{
In \cite{data paper}, the authors classified triorthogonal codes with small parameters and gave the distances of triorthogonal matrices corresponding to triorthogonal codes for each pair of $[[n,k]]$, where $n+k\le 38$. This section makes use of the methods mentioned earlier to construct the triorthogonal matrices and triorthogonal codes. First, we give a example to show that the codes in \cite{data paper} can be obtained by using our methods.}
\begin{ex}
	Here is a full-rank triorthogonal matrix of size $5$-by-$16$:
	\[
	G = \left[\begin{array}{cccccccccccccccc}
		1 &1 &1 &1 &1 &1 &1 &1 &1 &1 &1 &1 &1 &1 &1 &1 \\
		0 &1 &0 &0 &0 &1 &1 &1 &0 &0 &0 &0 &1 &1 &1 &1 \\
		0 &0 &1 &0 &0 &1 &0 &0 &1 &1 &0 &1 &0 &1 &1 &1 \\
		0 &0 &0 &1 &0 &0 &1 &0 &1 &0 &1 &1 &1 &0 &1 &1 \\
		0 &0 &0 &0 &1 &0 &0 &1 &0 &1 &1 &1 &1 &1 &0 &1 \\
	\end{array} \right]_{5\times 16}.
	\]
	After adding the $2$, $3$, $4$, \red{$5$-th rows of $G$} to the first row, we have
	\[G_1 = \left[\begin{array}{cccccccccccccccc}
		1 &0 &0 &0 &0 &1 &1 &1 &1 &1 &1 &0 &0 &0 &0 &1 \\
		0 &1 &0 &0 &0 &1 &1 &1 &0 &0 &0 &0 &1 &1 &1 &1 \\
		0 &0 &1 &0 &0 &1 &0 &0 &1 &1 &0 &1 &0 &1 &1 &1 \\
		0 &0 &0 &1 &0 &0 &1 &0 &1 &0 &1 &1 &1 &0 &1 &1 \\
		0 &0 &0 &0 &1 &0 &0 &1 &0 &1 &1 &1 &1 &1 &0 &1 \\
	\end{array} \right]_{5\times 16}.\]
	Deleting the first column of $G_1$, we have
	\[
	G_2 = \left[\begin{array}{ccccccccccccccc}
		0 &0 &0 &0 &1 &1 &1 &1 &1 &1 &0 &0 &0 &0 &1 \\
		1 &0 &0 &0 &1 &1 &1 &0 &0 &0 &0 &1 &1 &1 &1 \\
		0 &1 &0 &0 &1 &0 &0 &1 &1 &0 &1 &0 &1 &1 &1 \\
		0 &0 &1 &0 &0 &1 &0 &1 &0 &1 &1 &1 &0 &1 &1 \\
		0 &0 &0 &1 &0 &0 &1 &0 &1 &1 &1 &1 &1 &0 &1 \\
	\end{array} \right]_{5\times 15}.
	\]
	$\TriCode(G_2)$ has parameters $[[15,1,3]]$. Moreover, if we delete the first column of $G_2$, then the new matrix
	\[
	G_3 = \left[\begin{array}{cccccccccccccc}
		0 &0 &0 &1 &1 &1 &1 &1 &1 &0 &0 &0 &0 &1 \\
		0 &0 &0 &1 &1 &1 &0 &0 &0 &0 &1 &1 &1 &1 \\
		1 &0 &0 &1 &0 &0 &1 &1 &0 &1 &0 &1 &1 &1 \\
		0 &1 &0 &0 &1 &0 &1 &0 &1 &1 &1 &0 &1 &1 \\
		0 &0 &1 &0 &0 &1 &0 &1 &1 &1 &1 &1 &0 &1 \\
	\end{array} \right]_{5\times 14}
	\]
	generates $\TriCode(G_3)$ of parameters $[[14,2,2]]$. By Theorem \ref{dirct sum of tricodes}, then
	\[
	G_4 = \left[ \begin{array}{c|c}
		G_2   &O \\
		\hline
		O     &G_3 \\
	\end{array} \right] \text{ and }
	G_5 = \left[ \begin{array}{c|c}
		G_2   &O \\
		\hline
		O     &G_2 \\
	\end{array} \right]
	\]
	can generate $\TriCode(G_4)$ of parameters $[[29,3,2]]$ and $\TriCode(G_5)$ of parameters $[[30,2,3]]$, respectively. One can also check these results by Magma. \rR{This example illustrates that part of triorthogonal codes in \cite{data paper} can also be constructed by using our methods.}
\end{ex}

\rR{
	\cite{data paper} has given many triorthogonal codes with parameters $[[n,k,d_Z]]$, where $n+k\le 38$ and each code in \cite{data paper} can be constructed as a descendant of some unital triorthogonal subspaces. By using these codes, many new triorthogonal codes with parameters $[[n,k,d_Z]]$ can be obtained and we list them in Table \ref{examples} , where $n+k\ge 40$ and the value of $d_Z$ is the best that our methods can achieve.
	Table \ref{examples} can be seen as a supplementary of Table II in \cite{data paper}. It is worth noting that Table \ref{examples} can actually continue to be supplemented by using the methods in this paper.}

\begin{table}[]
	\caption{\rR{The triorthogonal codes with parameters $[[n,k,d_Z]]$ that we obtain, where $n+k\ge 40$ and  the value of $d_Z$ is the best that our methods can achieve.}}
	\label{examples}
	\begin{center}
		\setlength{\tabcolsep}{6mm}{
			\begin{tabular}{|c|cccccc|}
				\hline
				\multirow{2}{*}{$n$} & \multicolumn{6}{c|}{$d_Z$}                                                                                                                             \\ \cline{2-7}
				& \multicolumn{1}{c|}{$k=2$} & \multicolumn{1}{c|}{$k=3$} & \multicolumn{1}{c|}{$k=4$} & \multicolumn{1}{c|}{$k=5$} & \multicolumn{1}{c|}{$k=6$} & $k=7$ \\ \hline
				38                   & \multicolumn{1}{c|}{3}     & \multicolumn{1}{c|}{--}    & \multicolumn{1}{c|}{--}    & \multicolumn{1}{c|}{--}    & \multicolumn{1}{c|}{2}     & --    \\ \hline
				39                   & \multicolumn{1}{c|}{--}    & \multicolumn{1}{c|}{--}    & \multicolumn{1}{c|}{--}    & \multicolumn{1}{c|}{2}     & \multicolumn{1}{c|}{--}    & 1     \\ \hline
				40                   & \multicolumn{1}{c|}{3}     & \multicolumn{1}{c|}{--}    & \multicolumn{1}{c|}{2}     & \multicolumn{1}{c|}{--}    & \multicolumn{1}{c|}{2}     & --    \\ \hline
				41                   & \multicolumn{1}{c|}{--}    & \multicolumn{1}{c|}{2}     & \multicolumn{1}{c|}{--}    & \multicolumn{1}{c|}{2}     & \multicolumn{1}{c|}{--}    & 2     \\ \hline
				42                   & \multicolumn{1}{c|}{3}     & \multicolumn{1}{c|}{--}    & \multicolumn{1}{c|}{2}     & \multicolumn{1}{c|}{--}    & \multicolumn{1}{c|}{2}     & --    \\ \hline
				43                   & \multicolumn{1}{c|}{--}    & \multicolumn{1}{c|}{3}     & \multicolumn{1}{c|}{--}    & \multicolumn{1}{c|}{2}     & \multicolumn{1}{c|}{--}    & 2     \\ \hline
				44                   & \multicolumn{1}{c|}{3}     & \multicolumn{1}{c|}{--}    & \multicolumn{1}{c|}{2}     & \multicolumn{1}{c|}{--}    & \multicolumn{1}{c|}{2}     & --    \\ \hline
				45                   & \multicolumn{1}{c|}{--}    & \multicolumn{1}{c|}{3}     & \multicolumn{1}{c|}{--}    & \multicolumn{1}{c|}{2}     & \multicolumn{1}{c|}{--}    & 2     \\ \hline
				46                   & \multicolumn{1}{c|}{3}     & \multicolumn{1}{c|}{--}    & \multicolumn{1}{c|}{2}     & \multicolumn{1}{c|}{--}    & \multicolumn{1}{c|}{2}     & --    \\ \hline
				47                   & \multicolumn{1}{c|}{--}    & \multicolumn{1}{c|}{3}     & \multicolumn{1}{c|}{--}    & \multicolumn{1}{c|}{2}     & \multicolumn{1}{c|}{--}    & 2     \\ \hline
				48                   & \multicolumn{1}{c|}{3}     & \multicolumn{1}{c|}{--}    & \multicolumn{1}{c|}{2}     & \multicolumn{1}{c|}{--}    & \multicolumn{1}{c|}{2}     & --    \\ \hline
				49                   & \multicolumn{1}{c|}{--}    & \multicolumn{1}{c|}{3}     & \multicolumn{1}{c|}{--}    & \multicolumn{1}{c|}{2}     & \multicolumn{1}{c|}{--}    & 2     \\ \hline
				50                   & \multicolumn{1}{c|}{3}     & \multicolumn{1}{c|}{--}    & \multicolumn{1}{c|}{2}     & \multicolumn{1}{c|}{--}    & \multicolumn{1}{c|}{2}     & --    \\ \hline
				51                   & \multicolumn{1}{c|}{--}    & \multicolumn{1}{c|}{3}     & \multicolumn{1}{c|}{--}    & \multicolumn{1}{c|}{2}     & \multicolumn{1}{c|}{--}    & 2     \\ \hline
				52                   & \multicolumn{1}{c|}{3}     & \multicolumn{1}{c|}{--}    & \multicolumn{1}{c|}{2}     & \multicolumn{1}{c|}{--}    & \multicolumn{1}{c|}{2}     & --    \\ \hline
				53                   & \multicolumn{1}{c|}{--}    & \multicolumn{1}{c|}{3}     & \multicolumn{1}{c|}{--}    & \multicolumn{1}{c|}{2}     & \multicolumn{1}{c|}{--}    & 2     \\ \hline
				54                   & \multicolumn{1}{c|}{3}     & \multicolumn{1}{c|}{--}    & \multicolumn{1}{c|}{2}     & \multicolumn{1}{c|}{--}    & \multicolumn{1}{c|}{2}     & --    \\ \hline
				55                   & \multicolumn{1}{c|}{--}    & \multicolumn{1}{c|}{3}     & \multicolumn{1}{c|}{--}    & \multicolumn{1}{c|}{2}     & \multicolumn{1}{c|}{--}    & 2     \\ \hline
				56                   & \multicolumn{1}{c|}{3}     & \multicolumn{1}{c|}{--}    & \multicolumn{1}{c|}{3}     & \multicolumn{1}{c|}{--}    & \multicolumn{1}{c|}{2}     & --    \\ \hline
				57                   & \multicolumn{1}{c|}{--}    & \multicolumn{1}{c|}{3}     & \multicolumn{1}{c|}{--}    & \multicolumn{1}{c|}{2}     & \multicolumn{1}{c|}{--}    & 2     \\ \hline
				58                   & \multicolumn{1}{c|}{3}     & \multicolumn{1}{c|}{--}    & \multicolumn{1}{c|}{3}     & \multicolumn{1}{c|}{--}    & \multicolumn{1}{c|}{2}     & --    \\ \hline
				59                   & \multicolumn{1}{c|}{--}    & \multicolumn{1}{c|}{3}     & \multicolumn{1}{c|}{--}    & \multicolumn{1}{c|}{2}     & \multicolumn{1}{c|}{--}    & 2     \\ \hline
				60                   & \multicolumn{1}{c|}{3}     & \multicolumn{1}{c|}{--}    & \multicolumn{1}{c|}{3}     & \multicolumn{1}{c|}{--}    & \multicolumn{1}{c|}{2}     & --    \\ \hline
				61                   & \multicolumn{1}{c|}{--}    & \multicolumn{1}{c|}{3}     & \multicolumn{1}{c|}{--}    & \multicolumn{1}{c|}{2}     & \multicolumn{1}{c|}{--}    & 2     \\ \hline
				62                   & \multicolumn{1}{c|}{3}     & \multicolumn{1}{c|}{--}    & \multicolumn{1}{c|}{3}     & \multicolumn{1}{c|}{--}    & \multicolumn{1}{c|}{2}     & --    \\ \hline
				63                   & \multicolumn{1}{c|}{--}    & \multicolumn{1}{c|}{3}     & \multicolumn{1}{c|}{--}    & \multicolumn{1}{c|}{3}     & \multicolumn{1}{c|}{--}    & 2     \\ \hline
				64                   & \multicolumn{1}{c|}{3}     & \multicolumn{1}{c|}{--}    & \multicolumn{1}{c|}{3}     & \multicolumn{1}{c|}{--}    & \multicolumn{1}{c|}{2}     & --    \\ \hline
				65                   & \multicolumn{1}{c|}{--}    & \multicolumn{1}{c|}{3}     & \multicolumn{1}{c|}{--}    & \multicolumn{1}{c|}{3}     & \multicolumn{1}{c|}{--}    & 2     \\ \hline
				66                   & \multicolumn{1}{c|}{3}     & \multicolumn{1}{c|}{--}    & \multicolumn{1}{c|}{3}     & \multicolumn{1}{c|}{--}    & \multicolumn{1}{c|}{2}     & --    \\ \hline
			\end{tabular}
		}
	\end{center}
\end{table}

\section{Conclusion}

Our paper is divided into two main parts.
The goal of the first part is to construct new triorthogonal matrices from known ones. Shortening, extending, direct sum, and some other classical techniques have been used, and the parameters of the corresponding quantum codes have been also discussed.

The content of the second part is to find the triorthogonal matrices by using the self-dual codes.

\begin{itemize}
\item When is a self-dual code also triorthogonal? A necessary and sufficient condition is that the code is, up to equivalence, a direct sum of repetitions codes of length two.

\item Regarding the question of the existence of triorthogonal spaces that are the subspaces of self-dual codes, we give \red{an algorithm} to find such triorthogonal spaces. Also, by this algorithm, we give some lower bounds of the dimension of such triorthogonal spaces.
\end{itemize}

\rR{
In the last section of our paper, by using the triorthogonal codes of table II in \cite{data paper}, we continue to search for many triorthogonal codes with parameters $[[n,k,d_Z]]$, where $n+k\ge 40$. We have included these results in Table \ref{examples}, which can be seen as a supplementary of Table II in \cite{data paper}
}

It will be an interesting problem to construct an infinite family of triorthogonal matrices or triorthogonal codes  from cyclic codes, 2-designs, or strongly regular graphs.



\section{Conflict of Interest}

The authors have no conflicts of interest to declare that are relevant to the content
of this article.

\section{Data Deposition Information}

Our data can be obtained from the authors upon reasonable request.


\begin{thebibliography}{99}
	\bibitem{css2} A.M. Steane: \emph{Enlargement of Calderbank-Shor-Steane quantum codes}. IEEE Transactions on Information Theory, \textbf{45}, 2492–2495, (1999).
	\bibitem{Caldetal} A.R. Calderbank, E.M. Rains, P. Shor, N.J.A. Sloane: \emph{Quantum error correction via codes over $GF(4)$}. IEEE Transactions on Information Theory, \textbf{44}, 1369--1387, (1998).
	\bibitem{css} A.R. Calderbank, P. Shor: \emph{Good quantum error-correcting codes exist}. Physical Review A, \textbf{54}, 1098–1105, (1996).
	\bibitem{book} F.J. MacWilliams, N.J.A. Sloane: \emph{The Theory of Error-Correcting Codes}. North-Holland, Amsterdam, (1977).

	\bibitem{basis paper 1} J. Haah, M.B. Hastings: \emph{Codes and protocols for distilling $T$, controlled-$S$, and Toffoli gates}. Quantum, \textbf{2}, 71, (2018).
	\bibitem{optimal asymptotic} J. Haah, M.B. Hastings, D. Poulin, and D. Wecker: \emph{Magic state distillation with low space overhead and optimal asymptotic input count}. Quantum, \textbf{1}, 31, (2017).
	
	\bibitem{build up over F2} J.-L. Kim: \emph{New extremal self-dual codes of lengths 36, 38, and 58}. IEEE Transactions on Information Theory, \textbf{47}, 386--393, (2001).
	
	\bibitem{basis paper 2} S. Bravyi, J. Haah: \emph{Magic-state distillation with low overhead}. Physical Review A, \textbf{86} (052329), (2012).
	\bibitem{data paper} S. Nezami, J. Haah: \emph{Classification of small triorthogonal codes}.  Physical Review A, \textbf{106} (012437), (2022).
	
\end{thebibliography}
\end{document}